\newtheorem{remark}{Remark}[section]
\newtheorem{thm}{Theorem}[section]
\newtheorem{corollary}{Corollary}[section]
\newtheorem{proof}{Proof}[section]
\title{Optimal Control for Unmanned Systems with One-way Broadcast Communication
%%%% Cite as
%%%% Update your official citation here when published 
%\thanks{\textit{\underline{Citation}}: 
%\textbf{Authors. Title. Pages.... DOI:000000/11111.}} 
}
\author{
  Chao Ge \\
  School of Mathematical Sciences \\
  University of Chinese Academy of Sciences \\
  Beijing, China\\
  \texttt{gechao@amss.ac.cn} \\
  %% examples of more authors
   \And
  Ge Chen \\
  Key Laboratory of Systems and Control \\
  Academy of Mathematics and Systems Science, Chinese Academy of Science \\
  Beijing, China\\
  \texttt{chenge@amss.ac.cn} \\
  %% \AND
  %% Coauthor \\
  %% Affiliation \\
  %% Address \\
  %% \texttt{email} \\
  %% \And
  %% Coauthor \\
  %% Affiliation \\
  %% Address \\
  %% \texttt{email} \\
  %% \And
  %% Coauthor \\
  %% Affiliation \\
  %% Address \\
  %% \texttt{email} \\
}
\begin{document}
\maketitle

\begin{abstract}
Unmanned systems (USs) including unmanned aerial vehicles, unmanned underwater vehicles, and unmanned ground vehicles have great application prospects in military and civil fields, among which the process of finding feasible and optimal paths for the agents in USs is a kernel problem.
Traditional path finding algorithms are hard to adequately obtain optimal paths in real-time under fast time-varying and poor communication environments.
We propose an online optimal control algorithm for USs based on a one-way broadcast communication mode under the assumption of a poor communication environment, mobile targets, radars (or sonar), and missiles
(or torpedoes).
With the principle of receding
horizon control, optimal (or suboptimal) paths are then generated by the approximation theory of neural networks and gradient optimization techniques, with low computation requirements.
Also, we give a convergence analysis for our algorithm, and show that each agent can reach its target in finite time  under some conditions on  agents, targets and radar-missiles. Moreover, simulations demonstrate that the agents in USs can generate optimal (or suboptimal) paths in real time using our algorithm while effectively avoiding collision with other agents or detection by enemy radars.
\end{abstract}

% keywords can be removed
\keywords{Unmanned systems \and  One-way broadcast communication \and  Approximation theory \and  Gradient optimization}

\section{Introduction}\label{intro}
Due to to the advantages of high efficiency/cost ratio and strong adaptability,
the application of unmanned systems (USs) including unmanned aerial vehicles (UAVs), unmanned underwater vehicles (UUVs), and unmanned ground vehicles (UGVs) for surveillance and striking against mobile targets is increasing.
Significant development has been achieved by military and civil fields so far.
It is essential to find feasible and optimal or suboptimal real-time paths for agents to enhance USs control further.
The paths must satisfy certain constraints, such as the maneuverability of agents and the communication environment. Additionally, the cost and antagonistic factors are also common factors that need to consider, which can contribute to the optimization of the path performance\cite{aggarwal2020path,bortoff2000path}.

Path finding is a traditional problem for USs and has been extensively researched in \cite{aggarwal2020path,lavalle2006planning,hsu2002randomized,wang2020optimal}.
Generally, the research can be divided into two categories according to the real-time abilities of algorithms.
Off-line path planning algorithms are based on complete prior information of the environment and require all the environment information to plan the desired paths before execution.
Among this category of research, the most commonly used off-line algorithms can be further subdivided into: i) search algorithms such as rapid-exploring random trees algorithm\cite{lavalle1998rapidly,kuffner2000rrt}, A* algorithm\cite{hart1968formal,masehian2007classic}, gravitational search algorithm\cite{rashedi2009gsa,li2012path}; ii) swarm intelligence method such as ant colony optimization\cite{ma2018path}, particle swarm optimization\cite{fu2011phase,wu2020cooperative}, genetic algorithm\cite{pehlivanoglu2012new}; iii) optimal control-based method\cite{bai2018integrated,bai2017clustering}, etc.
Off-line path-planning algorithms rely on pre-obtained environment information to a great extent. Hence, they are mainly used in a static environment or a dynamic environment with the known motion law.
Algorithms are designed to find paths in real-time by modifying off-line algorithms with the methods
including  dynamic window approach\cite{wang2020dynamics}, model predictive control\cite{liu2017path}, dynamic neural network model\cite{chen2020optimal}, deep neural network model\cite{wang2021real}, and parallel computing\cite{roberge2012comparison}. However, such methods require complex computation and could be time-consuming. Besides, they may experience problems of local optima and adaptability. The potential field method\cite{warren1989global,ge2002dynamic,chen2016uav} is also suitable for real-time applications. This method considers that the agents in USs move according to a combined force of the attractive field and repulsive potential of targets and obstacles. Although, this is a simple method to obtain feasible paths for USs, it does not consider  path optimization.
In addition, a poor communication environment can further complicate path finding which brings real-time algorithms a great deal of difficulty and challenge. The agents in USs may encounter the following difficulties in real-time path finding:
\begin{itemize}
\item Limited communication channel capacity, which inhibits one-to-one communication, especially in the case of large clusters.
\item Enemy radio suppression. Due to the size, cost, and energy constraints of USs' agents, it is difficult for  agents to send signals and these signals can easily be suppressed. Thus, agents are susceptible to interference.
\item Some scenes will require radio silence to enhance concealment, such as raid operations.
\item  When the application scenario moves underwater, two-way communication becomes hard or risky. Seawater has a strong absorption effect on electromagnetic wave energy. The shorter the wavelength of electromagnetic waves, the greater the attenuation in seawater. Therefore, short-wave attenuation in water is rapid, and communication mainly uses long-wave.
However, the UUVs such as submarines only receive one-way long waves from the outside world because the transmitter is too large.
When necessary to transmit information, the submarines float or use floating antennas for shortwave communication which is at risk of being detected.
\end{itemize}

To better adapt to such poor communication environment and satisfy the real-time requirement,
we propose a new one-way broadcast communication framework.
In this framework, we combines and modifies the weighted Hungarian method, neural network algorithm, and gradient optimization techniques, and
present an online algorithm for optimal path control of USs in this paper.
The one-way broadcast communication refers to that each agent of USs only receives signals from the command center but does not send any signals to the command center or any other agents.
In detail, the command center detects the positions of enemy radars and missiles, estimates the positions of agents, allocates targets for agents by the weighted Hungarian method, and processes such information into broadcast signals at set intervals.
After receiving signals from the command center, by the receding
horizon control way, agents compute optimal paths using the approximation theory of neural networks and gradient optimization techniques. Using this method, agents can cooperatively reach several mobile targets and avoid collision and detection in a dynamic and antagonistic environment with the least cost.

\emph{Contributions:}
First, to enhance practicality, we consider a dynamic and antagonistic environment, which includes poor and limited communication, mobile targets, enemy radars (or sonar), and missiles. These factors make the construction and optimization of USs difficult, and traditional off-line  or real-time algorithms are hard to apply.
Different from previous work\cite{liu2017path,chen2020optimal,wang2021real}, our algorithm introduces a one-way communication framework in which
the agents receive broadcast information from their command center.
This framework can reduce the communication requirements of agents and should be more suitable for harsh or antagonistic environments.

Second, different from traditional potential field method\cite{warren1989global,ge2002dynamic,chen2016uav}, our algorithm considers the on-line path optimization for all agents, which may reduce their energy consumption. We adopt the simple but efficient gradient descent method to solve the optimization problem,
which has the advantage of low computation for each agent.

Third, we give some convergence analysis for our algorithm. It is shown that, under some conditions on  agents, targets and radar-missiles, each agent can reach its target in finite time. Meanwhile, the upper bound of arrival time of each agent is estimated.

Finally, we present some simulations to show the performance of our algorithm. Simulations show that our algorithm combined of neural networks and gradient optimization techniques
  is more sensitive than the basic gradient method. Meanwhile, each agent can reach its target in the dynamic environment successfully.
Moreover, the computational time in simulations indicates the low
computational requirement of our algorithm.

\emph{Organization}: In section \ref{description}, we first present some necessary assumptions and describe the optimal path control problem in detail. The algorithm is then presented and discussed in section \ref{algorithm}.
In section \ref{analysis}, we give our theoretic convergence analysis of our algorithm. Then, we present simulations to test the capabilities of our algorithm in section \ref{Simulations}. Finally, some conclusions are given in section \ref{Conclusions}.

\section{Problem Description}\label{description}
The agents in USs encounter numerous challenges in poor communication environments, as discussed in Section \ref{intro}.
We propose the one-way broadcast communication mode to address these difficulties in this work.
A dynamic and antagonistic environment is first considered, where antagonistic factors include  radars  and missiles  that are fixed on the ground or mounted on vehicles (e.g., Tor anti-aircraft missiles), or sonar and torpedoes placed on  aircraft or warships. To simplify the exposition,
 such antagonistic factors we  refer to as radar-missiles.
We consider the use of a command aircraft or a command center for broadcast in poor communication environments where two-way or one-to-one communication is impossible. The command aircraft may be equipped with an airborne moving target detection (MTD) radar, which can detect the positions of moving targets and enemy radar-missiles. In a similar manner, the command aircraft can detect the positions of agents or calculate and estimate their positions according to their initial positions and the movements of each step. These functions can also be carried out by a reconnaissance aircraft and communicated to the command center.
The command aircraft or command center measures costs and assigns targets to agents using the position information. This information is then processed into signals and sent to agents as one-way broadcasts. The agents compute optimal and safe paths according to the signals received, considering the costs, detection risk of radars, and the collision risk.

This one-way broadcast communication mode could also be applied to the next generation of fighter jets. One of the core technologies of the next generation fighter is artificial intelligence drone control technology.
Designing command aircraft for stealth comes at the expense of bomb load capacity. To make up for this imperfection, stealth command aircraft instructs payload UAVs to complete missions by broadcast.
This mode cannot easily be suppressed by radio and UAVs only need to receive signals, which reduces the cost of UAVs. The application scenario is illustrated in Figure \ref{Fig:fighter}. Unmanned submarines can also adopt this communication mode, where reconnaissance planes are used to detect the targets' positions and enemy sonar towed by anti-submarine aircraft. A shore-based radio transmitter then receives the information from the reconnaissance planes and broadcasts signals to the biomimetic fish-like submarines, which track the targets in a self-organizing way. This application scenario is illustrated in Figure \ref{Fig:submarine}.
\begin{figure}[!t]
\centering
\includegraphics[width=3.5in]{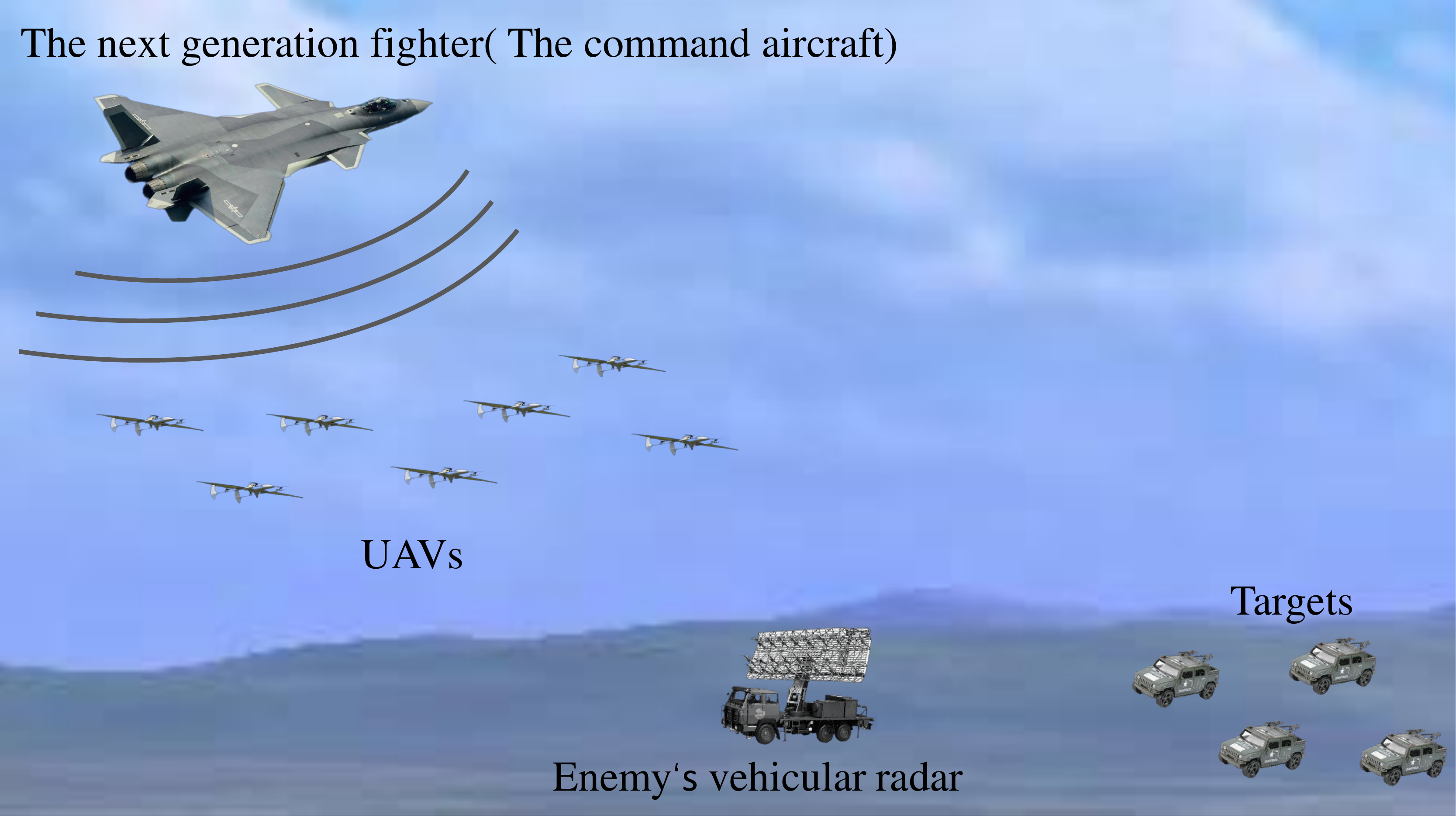}
\caption{One-way broadcast communication mode applied in the next generation of fighter jets.}
\label{Fig:fighter}
\end{figure}
\begin{figure}[!t]
\centering
\includegraphics[width=3.5in]{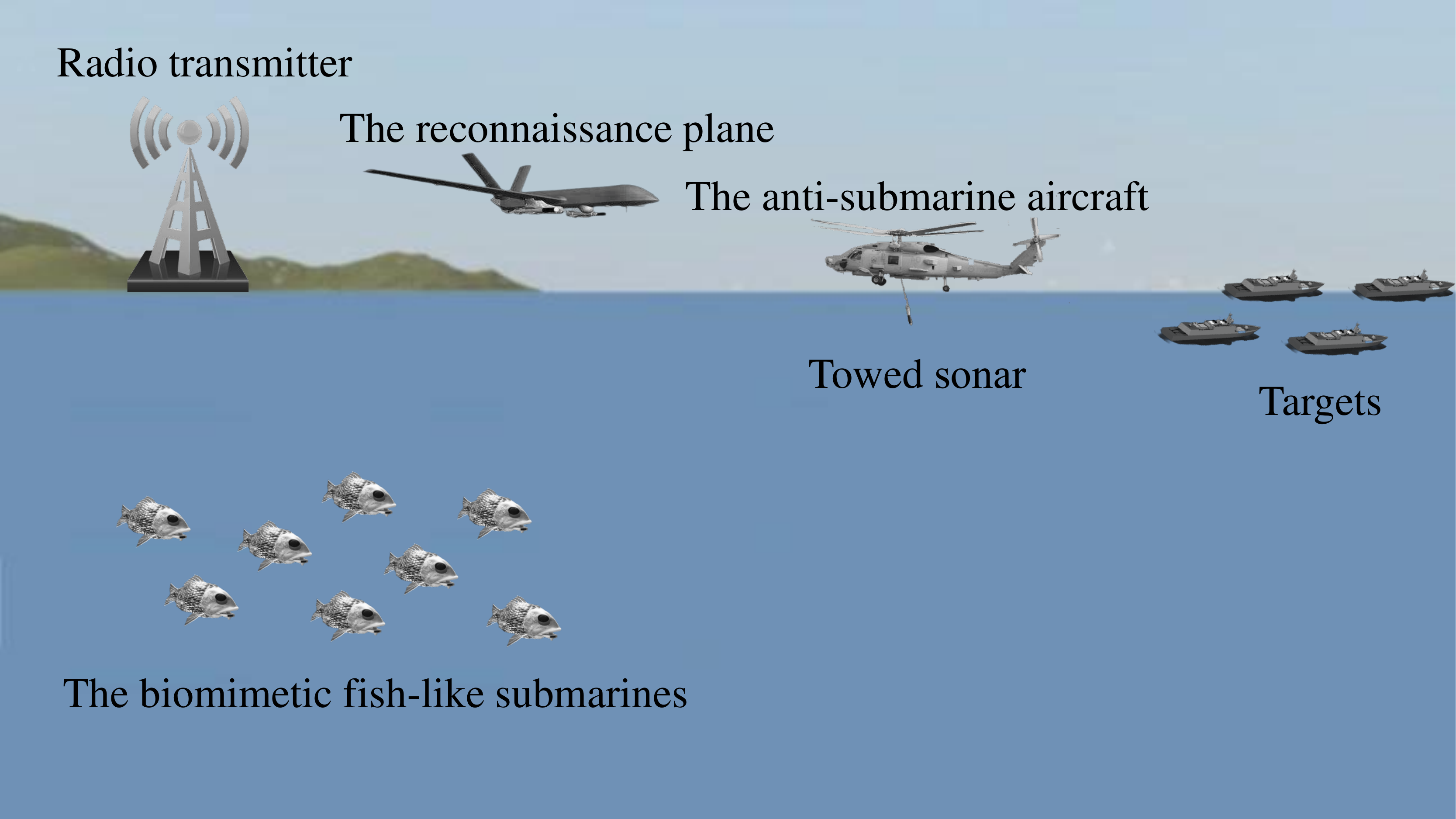}
\caption{One-way broadcast communication mode applied in unmanned submarines.}
\label{Fig:submarine}
\end{figure}

We propose the optimal path control problem of USs by constructing a model first.
To simplify the expression, we consider a command center, $N$ agents, $M$ enemy radar-missiles, and $K$ targets distributed in a two-dimensional bounded region $S$.
As it is assumed that one missile must be accompanied by one radar, they are put together, and the detection range of the radar is larger than the attack range of the missile.
The targets move bounded by a maximum velocity. We can only detect their current positions, but cannot predict their positions at the next moment. Also,  we only consider a single task whose
objective is that each target is reached by at least one agent with the shortest safe path. Therefore, the number of agents needs to be larger than or at least equal to the number of targets. Let:
\begin{itemize}

\item[(1)] $\{U_1,U_2,\ldots,U_N\}$ is a team of $N$ agents in USs. Let $\left(x_i^U(t),y_i^U(t)\right)\in\mathbb{R}^2$ denote the position of $U_i$ at time $t$.
\item[(2)] $\{O_1,O_2,\ldots,O_M\}$ is a set of $M$ mobile radar-missiles. Let $\left(x_j^O(t),y_j^O(t)\right)\in\mathbb{R}^2$ be the position of $O_j$ at time $t$. The detection range of a radar and attack range of a missile are $R_d$ and $R_a$ respectively. We assume $R_d>R_a>0$.
\item[(3)] $\{T_1,T_2,\ldots,T_K\}$ is a set of $K$ mobile targets. Let $\left(x_i^T(t),y_i^T(t)\right)\in\mathbb{R}^2$ be the position of the target $T_i$ at time $t$. We assume $K \leq N$.
\end{itemize}

We consider the following two cases to optimize agents' paths online  in all possible situations:\\
Case I: The command center knows the positions of the agents it commands. For example, in Figure \ref{Fig:fighter}, the command aircraft may detect the positions of agents it commands by radar.\\
Case II: The command center does not know the positions of the agents it commands. For example, in Figure \ref{Fig:submarine}, the command center cannot know the positions of the biomimetic fish-like submarines.

To be more practical, this paper also considers many factors, such as collision avoidance, antagonistic radars and missiles, the cost of paths, the maneuverability of the agents, and the cooperation scheme. These factors can be reflected by constructing a suitable objective function and some constraints, and then we can formulate the optimal path control problem.
First, we construct a cost function which is composed of the following three parts:
\begin{itemize}
\item[(1)] \textit{Total distance:}
	
\quad Because each agent can only carry limited energy, its moving range is limited. In order to complete the task better,  we need to minimize the total distance between each agent and its target\cite{aggarwal2020path}. The assigned target for the agent $U_i$ at time $t$ is $T_{\rm{tar},i}(t)\in \mathcal{T}$, whose position is denoted by  $\left(x_{\rm{tar},i}^T(t),y_{\rm{tar},i}^T(t)\right)$.
The distance from the agent $U_i$ to its assigned target $T_{\rm{tar},i}(t)$ at time $t$ is
\begin{equation}\label{fun1}
D_i^{U-T}(t)
:= \sqrt{\left(x_i^U(t) - x_{\rm{tar},i}^T(t)\right)^2 + \left(y_i^U(t) - y_{\rm{tar},i}^T(t) \right)^2 }.
\end{equation}

To simplify calculations, we let
\begin{equation}\label{fun1}
f_{i,1}(t):= \frac{1}{2} \left(D_i^{U-T}(t)\right)^2.
\end{equation}

\item[(2)] \textit{Radar-missile risk:}
	
\quad Since the antagonistic factors include the detection of radars and the attack of missiles, agents need to keep away from the region within the range of radar and missiles. Inspired by\cite{barraquand1994penalty}, we use a penalty function to represent the risk of radar detection and missile attack. Let

\begin{equation*}
D_{i,j}^{U-O}(t) := \sqrt{( (x_i^U(t) - x_{j}^O(t) ) )^2 + (y_i^U(t) - y_{j}^O(t) )^2 }
\end{equation*}
denote the distance between the agent $U_i$ and the  radar-missile $O_j$ at time $t$, and set
\begin{flalign*}
	&
	\alpha_{i,j}(t):= \left\{
	\begin{aligned}
		0\quad,  &\quad \text{if}\quad D_{i,j}^{U-O}(t)  > R_d\\
		1\quad,  & \quad \text{otherwise}
	\end{aligned}
	\right.,\\
	&
	\beta_{i,j}(t):= \left\{
	\begin{aligned}
		0\quad,  &\quad \text{if}\quad D_{i,j}^{U-O}(t)  > R_a \\
		1\quad,  & \quad \text{otherwise}
	\end{aligned}
	\right..
\end{flalign*}
The penalty function for the risk of radar detection and missile attack is formulated by
\begin{equation}\label{fun2}
	f_{i,2}(t):= k_d \sum_{j=1}^{M} \alpha_{i,j}(t)+k_a  \sum_{j=1}^{M} \beta_{i,j}(t),
\end{equation}
where  $k_a>k_d>0$ are constants denoting penalty factors for the possible radar detection and missile attack respectively.

\item[(3)] \textit{Collision risk:}

\quad In the process of agents' move, collision avoidance is also worthy of attention. The agents need to keep a safe distance from each other. Let $D_{\rm{safe}}>0$ be a constant denoting the safe distance between two agents, and
\begin{equation*}
D_{i,j}^{U-U}(t) := \sqrt{( (x_i^U(t) - x_j^U(t) ) )^2 + (y_i^U(t) - y_j^U(t) )^2 }
\end{equation*}
denotes the distance between two agents $U_i$ and $U_j$ at time $t$.
Set
\begin{flalign*}
	\gamma_{i,j}(t):= \left\{
	\begin{aligned}
	0\quad,  &\quad \text{if}\quad D_{i,j}^{U-U}(t)  >D_{\rm{safe}}\\
	1\quad,  & \quad \text{otherwise}
	\end{aligned}
	\right..
\end{flalign*}
  Similar to (\ref{fun2}), we use the following penalty function to represent the risk of collision:
\begin{equation}\label{fun3}
	f_{i,3}(t) := k_c\sum_{j=1}^{N} \gamma_{i,j}(t),
\end{equation}
where  $k_c>0$ is a constant denoting the penalty factor for the possible collision.
\end{itemize}

Next, we formulate constraint conditions for our optimization problem.
We assume that each agent is equipped with control apparatus of speed and heading angle, whose movement can be simplified into the following model:
\begin{flalign*}
	\left\{
	\begin{aligned}
		&\dot{x}_i^U(t) = v_i^U(t) \rm{cos}\psi_i(t)\\
		&\dot{y}_i^U(t) = v_i^U(t) \rm{sin}\psi_i(t)
	\end{aligned}
	\right.,
\end{flalign*}
where $v_i^U(t)= \sqrt{ \dot{x}_i^U(t) ^2+ \dot{y}_i^U(t)^2} \geq 0$ is the velocity of agent $U_i$ at time $t$, and $\psi_i(t)$ is the heading angle of agent $U_i$ at time $t$.
Let $L_i(t)$ be the total moving range of $U_i$ until time $t$. We assume the initial time of our algorithm is $0$ and then
	$$L_i(t) = \int_{0}^{t} v_i^U(s) ds .$$
We consider three restrictions on the agents' maneuverability and set a maximum velocity $v_{\rm{max}} > 0$, a maximum lateral overload $n_{\rm{max}} > 0$, and a maximum moving range $\bar{L} > 0$ for each agent.
Similar to \cite{ren2007constrained}, we construct the following constraint conditions: for any $t\geq 0$ and $1\leq i\leq N$,
\begin{flalign}\label{constraint}
	\left\{
	\begin{aligned}
		&  v_i^U(t)  \leq v_{\rm{max}}\\
		& |\dot{\psi}_i(t)| \leq \frac{n_{\rm{max}}\emph{g}}{v_i^U(t)}\\
		& L_i(t)  \leq \bar{L}
	\end{aligned}
	\right.,
\end{flalign}
where $\emph{g}\approx 9.8~\rm{m/s^2}$ is the gravitational acceleration.

For the first two constraints of (\ref{constraint}), we will set some limitations in the control of each agent. For the last constraint, we estimate the value $L_i(t)+D_i^{U-T}(t)$ which represents the current moving
range of agent $U_i$ plus the distance from its current position to its target. When this value exceeds
the maximum moving range $\bar{L}$,  we add a penalty item to the cost function. Let
\begin{equation}\label{fun4}
	f_{i,4}(t) := k_l \eta_{i}(t),
\end{equation}
where
$$
	\eta_{i}(t):= \left\{
	\begin{aligned}
		0\quad,  &\quad \text{if}\quad L_i(t)+D_i^{U-T}(t) \leq \bar{L}\\
		1\quad,  & \quad \text{otherwise}
	\end{aligned}
	\right.,
$$
and $k_l>0$ is a constant denoting
the penalty factor. Then, the final objective function is represented as:
\begin{equation}\label{newobjfun}
	H(t):= \sum_{i=1}^{N}\left[f_{i,1}(t) + f_{i,2}(t) + f_{i,3}(t)+  f_{i,4}(t)\right].
\end{equation}
By (\ref{constraint}) and (\ref{newobjfun}), we write the online optimal path control problem for agents as follows:
\begin{flalign}\label{opt_pro}
    &\min_{\{\left(x_i^U(t), y_i^U(t)\right)\}_{i=1}^N} H(t) \nonumber\\
	&~\mbox{s.t.}~~\left\{
	\begin{aligned}
		&  v_i^U(t)  \leq v_{\rm{max}}\\
		& |\dot{\psi}_i(t)| \leq \frac{n_{\rm{max}}\emph{g}}{v_i^U(t)}
	\end{aligned}
	\right., ~~i=1,\ldots,N.
\end{flalign}
Because $f_{i,2}(t)$, $f_{i,3}(t)$, and $f_{i,4}(t)$ contain several $0-1$ items, traditional gradient descent algorithms cannot be used directly to solve the problem (\ref{opt_pro}).
Let
\begin{equation}\label{fun5}
F(t) = \sum_{i=1}^{N}\left[f_{i,2}(t) + f_{i,3}(t)+  f_{i,4}(t)\right].
\end{equation}
Then, the non-differentiable part of $H(t)$ is $F(t)$.
We can use neural network algorithm to approximate $F(t)$ by a differentiable function $F^*(t)$, then use a gradient descent algorithm to optimize $F^*(t)+\sum_{i=1}^{N}f_{i,1}(t)$.
The detailed algorithm is proposed in the next section.

\section{Optimal control for the agents in USs with known or unknown position}\label{algorithm}
In this section,  we solve the Case I and Case II mentioned in section \ref{description}. This section is divided into four parts. We first use the weighted Hungarian method\cite{kuhn1955hungarian} to assign targets to the agents. This method is mainly applied to one-to-one match problems, and the number of targets equals the number of agents.
However, in our setting the number of agents can be bigger than the number of targets. To solve this contradiction, we increase the number of targets by copying them until their number equals the number of agents. That is, we create virtual duplication of the targets by duplicating their positions in turn until the total number of real and virtual targets equals the number of agents.
If one agent breaks down or is shot down, the total number of real and virtual targets will be reduced proportionally. Then, we delete one copied virtual target following a certain order and keep them equal in number. If two agents are assigned to the same target, both of them move towards the target.
In Subsection \ref{Subsec_Hun}, we assign targets to agents, and in Subsection \ref{Subsec_approx}, we find a differentiable function to approximate our objective function. In Subsection \ref{Subsec_plan}, we use gradient descent to solve our optimization problem and give our complete path finding algorithm.
In Subsection \ref{unknown}, we specifically elaborated the modified algorithm for Case II.
\subsection{The target assignment for the agents}\label{Subsec_Hun}
Noting that $f_{i,1}$ in the objective function needs to assign a target for each agent, we adopt the classic Hungarian method to solve the assignment problem.
The Hungarian method runs on a bipartite graph, which is a special kind of graph. If the vertex set $\mathcal{V}$ of Graph $\mathcal{G}=(\mathcal{V},\mathcal{E})$ can be divided into two disjoint sets $\mathcal{R}$ and $\mathcal{P}$, and no vertices in the same set are adjacent, the graph $\mathcal{G}$ is a bipartite graph. Let $\mathcal{R}=\mathcal{R}(t)$ denote the set of agents at time $t$, and $\mathcal{P}=\mathcal{P}(t)$ denote the set of targets at time $t$.  Because the Hungarian method only solves one-to-one match problems, we can increase the number of targets by copying them such that $|\mathcal{R}| = |\mathcal{P}|$.
Set $\mathcal{G}=\mathcal{G}(t)=(\mathcal{V},\mathcal{E},\omega)$ to be a weighted bipartite graph, where $\mathcal{V}=\mathcal{R}\cup \mathcal{P}$,  $\mathcal{E}$ consists of edges whose endpoints are an agent and a target,
and the weight to an edge is given by the distance between the corresponding agent and target at time $t$.
In detail, for any $i \in \mathcal{R}$ and $j \in \mathcal{P}$, let $$\omega(i,j)=\omega(i,j)(t) = \frac{\lambda}{D_{i,j}^{U-T}(t)},$$ where $\lambda$ is a scaling factor and $D_{i,j}^{U-T}(t)$ is the distance between agent $U_i$ and target $T_j$ at time $t$.
We use the weighted Hungarian method proposed in\cite{kuhn1955hungarian} to solve the assignment problem for the agents.
The weighted Hungarian method maximizes the sum of all edges' weight $\sum_i \omega(i,j)(t)$, that is, we minimize the total distance between agents and their assigned targets.
The details of this method are described in Appendix.

\begin{remark}
In the second section of\cite{chopra2017distributed}, the weighted Hungarian method is proved to converge to an optimal solution with a convergence rate $O(r^3)$, where $r = |\mathcal{R}| = |\mathcal{P}| = |\mathcal{V}|/2$.
\end{remark}

We can run the weighted Hungarian method in the command center to reassign targets for the agents at a timed period, or when an agent is broken or shot down.

\subsection{Approximation to objective function with a BP neural network}\label{Subsec_approx}
As stated in Section \ref{description}, we note that $F(t)$ is a step function that contains several $0-1$ items. Thus, the traditional gradient method may have a poor performance. We need to find a differentiable function to approximate $F$ within a preassigned tolerance.

However, by (\ref{fun2}), (\ref{fun3}), (\ref{fun4}), and (\ref{fun5}), we get $F(t)$ is a step function which contains several $0-1$ items. Also, the arguments of $F(t)$ are $x_1^U(t), y_1^U(t),\dots,x_N^U(t), y_N^U(t), x_1^O(t), y_1^O(t), \dots, x_M^O(t)$, $y_M^O(t)$ in fact.
Let
\begin{eqnarray*}
&&\mathbf{X^{U}}(t):= \big(x_1^U(t), y_1^U(t),\dots,x_N^U(t), y_N^U(t)\big)^{\top},\\
&&\mathbf{X^{O}}(t):= \big(x_1^O(t), y_1^O(t),\dots,x_M^O(t), y_M^O(t)\big)^{\top},\\
&&\mathbf{X^{T}}(t):= \big(x_{\rm{tar},1}^T(t), y_{\rm{tar},1}^T(t),\dots,x_{\rm{tar},N}^T(t), y_{\rm{tar},N}^T(t)\big)^{\top},\\
&&\mathbf{X^{U,O}}(t):=\big(\mathbf{X^{U}}(t)^{\top},\mathbf{X^{O}}(t)^{\top}\big)^{\top},\\
&&\mathbf{X}(t):= \big(\mathbf{X^{U,O}}(t)^{\top},\mathbf{X^{T}}(t)^{\top}\big)^{\top}.
\end{eqnarray*}
Then, from (\ref{fun5}) and  (\ref{newobjfun}) the functions $F(t)$ and $H(t)$ can be respectively expressed by $F(t)=F\left(\mathbf{X^{U,O}}(t)\right)$ and
\begin{equation}\label{Hmatrixform}
H(t)=H\left(\mathbf{X}(t)\right)=F\left(\mathbf{X^{U,O}}(t)\right)+\frac{1}{2} \left(\mathbf{X^{U}}(t)-\mathbf{X^{T}}(t)\right)^{\top}\left(\mathbf{X^{U}}(t)-\mathbf{X^{T}}(t)\right).
\end{equation}

To make $F\left(\mathbf{X^{U,O}}(t)\right)$ continuous, we use the polygonal function to approximate $F\left(\mathbf{X^{U,O}}(t)\right)$ at these jump discontinuities.
Take the signum function
$$
\mbox{sign}(x) =\left\{
\begin{aligned}
	-1,  &~~ x<0\\
	0,  &~~ x=0\\
	1, &~~ x>0
\end{aligned}
\right.
$$
as an example, which can be approximated by the following polygonal function:
$$
g(x) =\left\{
\begin{aligned}
	-1,  &~~ x\leq -\eta\\
	x/\eta,  &~~ -\eta<x<\eta\\
	1, &~~ x\geq \eta
\end{aligned}
\right.,
$$
where $\eta>0$ is a small real number.
The approximation effect of $g(x)$ is shown in Figure \ref{Fig:step}, and $g(x)$ is a continuous function.
\begin{figure}[!t]
\centering
\includegraphics[height=2in, width=2.5in]{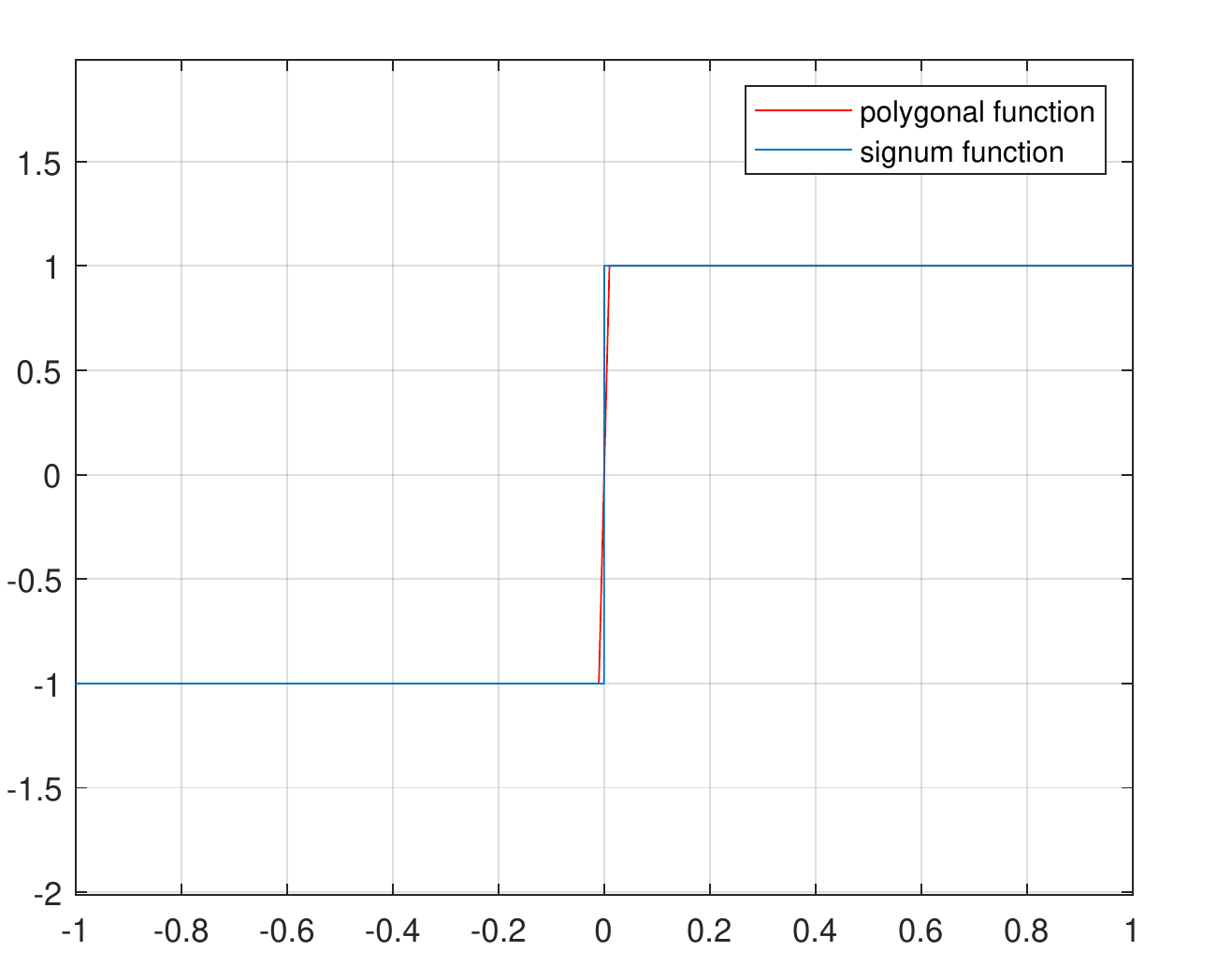}
\caption{The approximation effect of $g(x)$.}
\label{Fig:step}
\end{figure}

With this continuous method, we can obtain a continuous function $\tilde{F}\left(\mathbf{X^{U,O}}(t)\right)$ which is the continuous version of $F\left(\mathbf{X^{U,O}}(t)\right)$.
However, $\tilde{F}\left(\mathbf{X^{U,O}}(t)\right)$ is still not differentiable.
The back propagation (BP) neural network is one of the most popular learning algorithms in neural networks\cite{hecht1992theory}. Due to the capabilities of nonlinear mapping and generalization, the BP neural network is commonly used in the field of function approximation.
We adopt the single hidden layer BP neural network, which includes an input layer, a hidden layer, and an output layer.
The training, validation and test errors are evaluated using mean square error (MSE). If the best validation performance is in the allowance range, we take $\tilde{F}(\mathbf{X^{U,O}}(t))$ as the continuous and differentiable version of $F(\mathbf{X^{U,O}}(t))$.
Next, we introduce how to construct the three layers of the BP neural network for our approximation problem:

\emph{Input layer}:  We use the dynamic positions of agents and radar-missiles as the input variables of this BP neural network. In the training phase, we randomly generate their dynamic positions as training data. Then, the input vector is
\begin{equation*}
\mathbf{X^{U,O}}:= \big(x_1^U, y_1^U,\dots,x_N^U, y_N^U,
x_1^O, y_1^O, \dots, x_M^O, y_M^O\big)^{\top}.
\end{equation*}
The agents may break down or be shot down, consequently, we may lose some input information.
However, the size of input layer is fixed as $2N+2M$.
As a result, we adopt the zero-padding method to meet the requirement of the fixed input size of the  BP neural network, in which zeros are padded to the lost input information\cite{hashemi2019enlarging}, that is, when an agent breaks down or is shot down, we may lose its position information and the corresponding input is padded with zero.

\emph{Hidden layer:}
We choose the sigmoidal function $$\sigma(x) = \frac{1}{1+ \exp(-x)}$$ as the activation function of the hidden layer in the BP neural network. Assume that the neuron number of the hidden layer is $J$.
Let $\mathbf{\omega^j}\in\mathbb{R}^{2N+2M}$ and $\theta_j\in\mathbb{R}$, $j=1,\ldots,J$ be the weights and biases between the input layer and the hidden layer respectively.
Then, the output of the hidden layer is
$$\left(\sigma((\mathbf{X^{U,O}})^{\top}\mathbf{\omega^1}+\theta_1),\ldots,\sigma((\mathbf{X^{U,O}})^{\top}\mathbf{\omega^J}+\theta_J)\right).$$

\emph{Output layer:}
We choose the activation function of the output layer to be a linear function. Let $\lambda_j\in\mathbb{R}, j=1,\ldots,J$, and $\mu\in\mathbb{R}$ be the weights and bias between the hidden layer and the output layer
respectively.
Then, the output information is
\begin{equation} \label{aprfun}
	F^*(\mathbf{X^{U,O}}) :=\mu+ \sum_{j=1}^J \lambda_j \sigma((\mathbf{X^{U,O}})^{\top}\mathbf{\omega^j}+\theta_j) \nonumber=\mu+ \sum_{j=1}^J  \frac{\lambda_j}{1+ \exp(-(\mathbf{X^{U,O}})^{\top}\mathbf{\omega^j}-\theta_j)} .
\end{equation}

$F^*(\mathbf{X^{U,O}})$ can approximate $\tilde{F}(\mathbf{X^{U,O}})$ arbitrarily closely with suitable parameters $\mu, J, \lambda_j, \mathbf{\omega^j}, \theta_j$.
We can randomly generate a large number of input data set $\mathcal{X}=\{\mathbf{X^{U,O}}\}$, and choose
$\Big\{\left(\mathbf{X^{U,O}}, \tilde{F}(\mathbf{X^{U,O}})\right): \mathbf{X^{U,O}}\in \mathcal{X}\Big\}$ to be the training set.
After training, we can get our desired parameters $\mu, J, \lambda_j, \mathbf{\omega^j}, \theta_j$ of the neural network. Therefore, we can find a differentiable function $F^*$ to approximate $\tilde{F}$ and the gradient of $F^*$ can be calculated.
Replacing $F(t)$ in (\ref{opt_pro}) with $F^*(t)$,   the online optimal path control problem for the agents in USs can be approximated as follows:
\begin{flalign}\label{appro_fuc}
    &\min_{\big\{\left(x_i^U(t), y_i^U(t)\right)\big\}_{i=1}^N} \left\{ F^*(t)+\sum_{i=1}^{N}f_{i,1}(t)\right\}\nonumber\\
	&~\mbox{s.t.}~~\left\{
	\begin{aligned}
		&  v_i^U(t)  \leq v_{\rm{max}}\\
		& |\dot{\psi}_i(t)| \leq \frac{n_{\rm{max}}\emph{g}}{v_i^U(t)}
	\end{aligned}
	\right., ~~i=1,\ldots,N.
\end{flalign}

\begin{remark}
The parameter training of the neural network does not depend on the real-time positions of the agents and radar-missiles, and the computation can be off-line. With the trained parameters of the neural network, we can solve
problem (\ref{appro_fuc}) online.
\end{remark}

\subsection{Algorithm for solving optimal path control problem based on RHC}\label{Subsec_plan}
With the principle of RHC, we try to solve the optimal control problem (\ref{appro_fuc}) in real-time with modified gradient optimization techniques.
Gradient descent\cite{barzilai1988two} is one of the simplest and most classic methods for solving optimization problems. This method requires less computational effort and shorter execution time, which is beneficial to design a real-time algorithm. We note that the arguments of our objective function are the positions of  agents, targets and radar-missiles and we can only control the the agents' movement. Therefore, our algorithm only do gradient descent operations for part arguments with the first two constraints of (\ref{constraint}). In particular, we first compute the negative gradient of the objective function of problem (\ref{appro_fuc}) with respect to $\mathbf{X^{U}}$.
Let
\begin{equation}\label{obfun1}
H^*(t):=F^*(t)+\sum_{i=1}^{N}f_{i,1}(t).
\end{equation}
The negative gradient of $H^*(t)$ with respect to each agent $U_i$ is
\begin{equation}\label{direct1}
\mathbf{d}^i(t)=\left(d_x^i(t),d_y^i(t)\right):=-\left( \frac{\partial H^*}{\partial x_i^U}(t), \frac{\partial H^*}{\partial y_i^U}(t)\right)= -\left(\frac{\partial F^*}{\partial x_i^U}(t)+\frac{\partial f_{i,1}}{\partial x_i^U}(t), \frac{\partial F^*}{\partial y_i^U}(t)+\frac{\partial f_{i,1}}{\partial y_i^U}(t)\right),
\end{equation}
where the last equal sign uses (\ref{obfun1}) and (\ref{fun1}). Let
$\mathbf{\omega^j}=(\omega_1^j,\omega_2^j,\ldots,\omega_{2N+2M}^j)$.
By (\ref{aprfun}), (\ref{fun1}), and (\ref{direct1}), we have
\begin{equation*}
	d_x^i(t)= -\left(\sum_{j=1}^J  \frac{\lambda_j \omega_{2i-1}^j \exp(-\mathbf{X^{U,O}}(t)^{\top}\mathbf{\omega^j}-\theta_j)}{[1+ \exp(-\mathbf{X^{U,O}}(t)^{\top}\mathbf{\omega^j}-\theta_j)]^2}\right)-\left(x_i^U(t) - x_{\rm{tar},i}^T(t)\right),~~i=1,\ldots,N,
\end{equation*}
and
\begin{equation*}
	d_y^i(t)= -\left(\sum_{j=1}^J  \frac{\lambda_j \omega_{2i}^j \exp(-\mathbf{X^{U,O}}(t)^{\top}\mathbf{\omega^j}-\theta_j)}{[1+ \exp(-\mathbf{X^{U,O}}(t)^{\top}\mathbf{\omega^j}-\theta_j)]^2}\right) -\left(y_i^U(t) - y_{\rm{tar},i}^T(t)\right),~~i=1,\ldots,N.
\end{equation*}

However, the heading angle variation rate has the limit
$$|\dot{\psi}_i(t)| \leq \frac{n_{\rm{max}}\emph{g}}{v_{\rm{max}}}.$$
We denote the maximum heading angle variation during each execution time $\Delta t $ by
$$\psi_{\max}:= \frac{n_{\rm{max}}\emph{g}}{v_{\rm{max}}} \cdot \Delta t .$$
Let $\tilde{\theta}_i(t)$ be the angle variation between the current velocity $\left(\dot{x}_i^U(t),\dot{y}_i^U(t)\right)$ and the negative gradient $\mathbf{d}^i(t)$, i.e.,
\begin{equation*}
\tilde{\theta}_i(t)=\arccos\left( \frac{\dot{x}_i^U(t)d_x^i(t)+\dot{y}_i^U(t)d_y^i(t) }{\sqrt{[\dot{x}_i^U(t)^2+\dot{y}_i^U(t)^2][d_x^i(t)^2+d_y^i(t)^2]}}\right).
\end{equation*}

When the heading angle variation $\tilde{\theta}_i(t)$ is smaller than or equal to $\psi_{\max}$, we choose $\mathbf{d}^i(t)$ as the new direction of agent $U_i$. Otherwise, we let agent $U_i$ turn to the negative gradient as much as possible.
As stated in\cite{ritter1991fast}, the sign of the cross-product of two vectors in the coordinate plane can help us to determine whether the angle between two vectors is clockwise or counterclockwise.
If
$$\left(\dot{x}_i^U(t),\dot{y}_i^U(t)\right) \times \mathbf{d}^i(t) =\dot{x}_i^U(t) d_y^i(t) - d_x^i(t)\dot{y}_i^U(t)>0,$$
the negative gradient $\mathbf{d}^i(t)$ is in the counterclockwise direction of the current velocity, we turn agent $U_i$ in the counterclockwise direction by $\psi_{\max}$, otherwise we turn agent $U_i$ in the clockwise direction.

Let $\mathbf{\hat{d}}^i(t) = \left(\hat{d}^i_x(t),\hat{d}^i_y(t)\right)$ be the new direction. We have
\begin{equation*}
\hat{d}^i_x(t):= \left\{
\begin{aligned}
	&\dot{x}_i^U(t) \mbox{cos}\psi_{\max}-\dot{y}_i^U(t)\mbox{sin}\psi_{\max} , ~~\mbox{if}~\left(\dot{x}_i^U(t),\dot{y}_i^U(t)\right) \times \mathbf{d}^i(t)>0\\
	&\dot{x}_i^U(t) \mbox{cos}\psi_{\max}+\dot{y}_i^U(t)\mbox{sin}\psi_{\max} , ~~\text{otherwise}
\end{aligned}
\right.,
\end{equation*}
and
\begin{equation*}
\hat{d}^i_y(t):= \left\{
\begin{aligned}
	&\dot{x}_i^U(t)\mbox{sin}\psi_{\max} +\dot{y}_i^U(t)\mbox{cos}\psi_{\max} , ~~\mbox{if}~\left(\dot{x}_i^U(t),\dot{y}_i^U(t)\right) \times \mathbf{d}^i(t)>0\\
	&-\dot{x}_i^U(t)\mbox{sin}\psi_{\max}+\dot{y}_i^U(t)\mbox{cos}\psi_{\max} ,
	~~\text{otherwise}
\end{aligned}
\right..
\end{equation*}

Let $\mathbf{\widetilde{d}}^i(t) := (\widetilde{d}^i_x(t),\widetilde{d}^i_y(t)  )$ be the unit vector of the final direction, i.e.,
\begin{equation}\label{unitdirect}
	\mathbf{\widetilde{d}}^i(t):=
	\left\{
	\begin{aligned}
		& \mathbf{d}^i(t)/\|\mathbf{d}^i(t)\|, ~~\mbox{if}~~\tilde{\theta}_i(t)\leq \psi_{\max}\\
		&\mathbf{\hat{d}}^i(t)/\|\mathbf{\hat{d}}^i(t)\| ,~~\text{otherwise}
	\end{aligned}
	\right.,
\end{equation}
where $\|\cdot\|$ denotes the $L_2$-norm.
 Using the method of gradient descent\cite{barzilai1988two},
the positions of all agents are updated by
\begin{flalign}\label{update}
 \left\{
\begin{aligned}
x_i^U(t+\Delta t) = x_i^U(t) +  \kappa_i(t)  \widetilde{d}^i_x(t)\\
y_i^U(t+\Delta t) = y_i^U(t) +  \kappa_i(t)  \widetilde{d}^i_y(t)
\end{aligned}
\right., ~i=1,\ldots,N, t\geq 0,
\end{flalign}
where $\kappa_i(t)>0$ is the step size.
Since the exact solution of the best step size is complicated and needs to consider the constraint on the maximum velocity, we do not want to make too much effort to find it.
Therefore, we suppose that all agents move at their maximum velocity and the step size $\kappa_i(t)$ is chosen as the maximum distance of motion between every two steps, i.e.,
\begin{equation*}
\kappa_i(t)=v_{\rm{max}} \Delta t.
\end{equation*}
The agents keep running (\ref{update}) to update their positions, that is, $D_i^{U-T}(t) \leq v_{\rm{max}} \Delta t,~i=1,\dots,N$. When an agent reach a target, that is, $D_i^{U-T}(t) \leq v_{\rm{max}}$, we consider that this agent has accomplished its task. Then, we remove it and its target (and copied targets) from our algorithm. Other agents keep running (\ref{update}) until all the targets are reached.
As a summary, we draw the following flow chart Figure \ref{Fig:flowchart} to show the process of optimal control for the agents in USs.
\begin{figure}[!t]
\centering
\includegraphics[height=2.2in, width=3.5in]{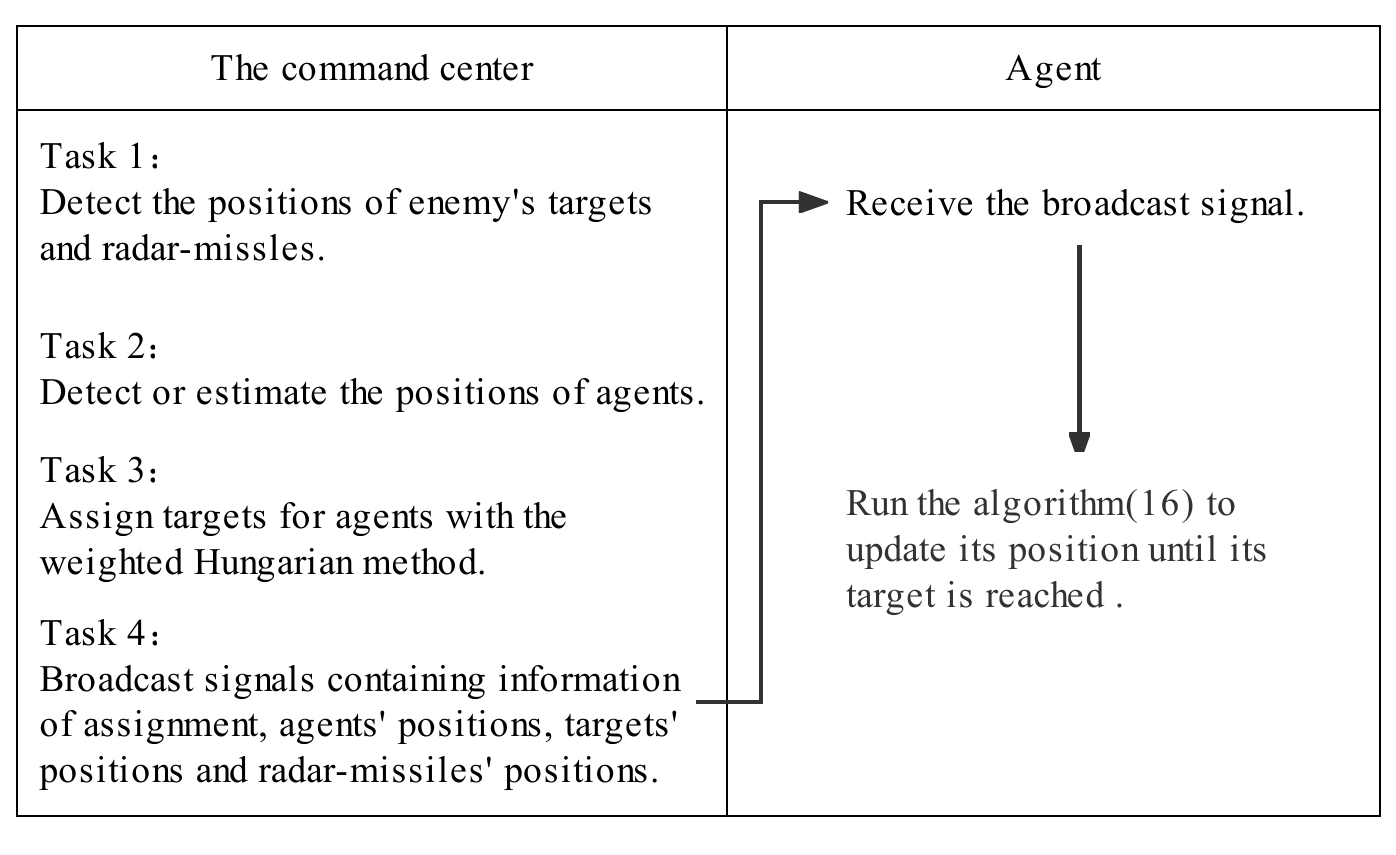}
\caption{Flow chart of the optimal control for the agents.}
\label{Fig:flowchart}
\end{figure}

\begin{remark}
The agents in unmanned systems may perform tasks in a $3$-dimensional environment. To simplify the expression, we only consider a $2$-dimensional flat plane. However, it is not difficult to extend our method to $3$-dimensional environment by enlarging the dimensions of the variables and  adjusting the model.
\end{remark}

\subsection{Optimal control for the agents with unknown positions}\label{unknown}
In this section, we first solve Case II mentioned in Section \ref{description}. The command center cannot know the exact positions of agents it commands, but it can calculate and estimate their approximate positions according to their initial positions and rough movements of each step.  Since the estimated positions may not be accurate, each agent may send its position to the command center on a long timed interval to enhance accuracy.
On the other hand, our algorithm does not require high accuracy of agents' positions, except for the collision avoidance part. Thus, each agent needs to be equipped with a detector to avoid the collision.
If an agent detects any obstacles or other agents in its surroundings with its sensor or radar, it will avoid collisions by steering into opposite direction or other techniques.
We make a small change to the objective function. Let
\begin{flalign*}
	\gamma_{i,j}(t):= \left\{
	\begin{aligned}
	1\quad,  &\quad \text{if $U_i$ detects $U_j$ at time $t$}\\
	0\quad,  & \quad \text{otherwise}
	\end{aligned}
	\right..
\end{flalign*}
Similar to section \ref{Subsec_plan}, the command center can calculate $U_i$'s direction $\left(\widetilde{d}^i_x(0),\widetilde{d}^i_y(0)\right)$. Therefore, the approximate position of $U_i$ at next step is
\begin{flalign*}
 \left\{
\begin{aligned}
x_i^{*U}(\Delta t) = x_{i}^U(0) +  \kappa_i(0)  \widetilde{d}^i_x(0)\\
y_i^{*U}(\Delta t) = y_{i}^U(0) +  \kappa_i(0)  \widetilde{d}^i_y(0)
\end{aligned}
\right.,
\end{flalign*}
where $\left(x_{i}^U(0),y_{i}^U(0)\right)$ is the initial position of $U_i$.
We use the the approximate position $\left(x_i^{*U}(t),y_i^{*U}(t)\right)$ to replace the accurate position $\left(x_i^U(t),y_i^U(t)\right)$ in section \ref{algorithm}.
In this way, the approximate positions of all agents can be updated by
\begin{flalign}\label{update2}
 \left\{
\begin{aligned}
x_i^{*U}(t+\Delta t) = x_i^{*U}(t) +  \kappa_i(t)  \widetilde{d}^i_x(t)\\
y_i^{*U}(t+\Delta t) = y_i^{*U}(t) +  \kappa_i(t)  \widetilde{d}^i_y(t)
\end{aligned}
\right., ~i=1,\ldots,N, t\geq 0.
\end{flalign}
The command center can get the approximate positions of all agents with (\ref{update2}) and use the weighted Hungarian method to assign targets for them. Then, the command center broadcast the information of assignment and the positions of targets and radar-missiles. The agents receive broadcast signals and run (\ref{update2}) to update their positions.

\section{The convergence Analysis}\label{analysis}
In this section, we give our theoretic analysis of our modified gradient optimization algorithm (\ref{update}).
To simplify the exposition, we choose $\Delta t=1$ without loss of generality.
Also, as we assumed in Section \ref{description}, the targets can move bounded by a maximum velocity. Assume that $\delta>0$ is the maximum velocity of targets.
Let
$$\delta(k):=\mathbf{X}^{\mathbf{T}} (k+1) - \mathbf{X}^{\mathbf{T}} (k)$$
represent the movement of the targets at time step $k$ . We note that $\delta(k)$ is a $2N$-dimensional vector and its $(2i-1)$-th and $(2i)$-th elements represent the movement of $U_i$'s target, then
\begin{equation}\label{newdecre1}
\delta_{2i-1}^2(k)+ \delta_{2i}^2(k) \leq \delta^2 ,~i=1,\dots,N.
\end{equation}

To simplify the analysis, we assume that  the angle variations  $\tilde{\theta}_1(k),\ldots,\tilde{\theta}_N(k)$
are always not bigger than the maximum heading angle variation $\psi_{\max}$,
which means by (\ref{unitdirect}), the unit vector of the iteration direction
\begin{equation*}\label{newdirect}
\mathbf{\widetilde{d}}^i(k)=\mathbf{d}^i(k)/\|\mathbf{d}^i(k)\|.
\end{equation*}
The convergence result for our algorithm can be formulated as the following theorem.
\begin{thm}\label{thm_2}
Suppose that $v_{\rm{max}}>\sqrt{2}\delta$, and there exists a constant $b \in (0,[v_{\rm{max}}^2-2\delta^2]/[2\sqrt{2}v_{\rm{max}}])$ such that\\
\begin{equation}\label{thm_condition}
\max_{i,k}\left\{\Big |\frac{\partial F^*}{\partial x_i^U}(k)\Big |,\Big |\frac{\partial F^*}{\partial y_i^U}(k)\Big | \right\}\leq b,
\end{equation}
and
\begin{equation}\label{thm_2_1}
\left( 1- 2\sqrt{1-\frac{2\sqrt{2}b}{v_{\rm{max}}}}-\frac{2\sqrt{2}\delta}{v_{\rm{max}}}\right)v_{\rm{max}}^2\\+ \delta^2 +2\sqrt{2} v_{\rm{max}} \delta +4 v_{\rm{max}} b:=-\varepsilon  <0.
\end{equation}
Then, under the iteration (\ref{update}), each agent $U_i$ can reach its target within time
$\left(D_{i}^{U-T}(0)\right)^2/\varepsilon$.
\end{thm}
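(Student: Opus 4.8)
The plan is to treat the squared agent–target distance $D_i^{U-T}(k)^2$ as a Lyapunov-type quantity and to show that it strictly decreases by a fixed amount $\varepsilon$ at every step for which $U_i$ has not yet reached its target. Since $D_i^{U-T}(k)^2\ge 0$ and starts at $D_i^{U-T}(0)^2$, a per-step decrease of at least $\varepsilon$ forces the number of steps before arrival to be at most $D_i^{U-T}(0)^2/\varepsilon$, which is exactly the claimed bound.

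First I would introduce the error vector $\mathbf{e}_i(k):=\left(x_i^U(k)-x_{\mathrm{tar},i}^T(k),\,y_i^U(k)-y_{\mathrm{tar},i}^T(k)\right)$, so that $r:=D_i^{U-T}(k)=\|\mathbf{e}_i(k)\|$, together with the $F^*$-gradient $\mathbf{g}_i(k):=\left(\frac{\partial F^*}{\partial x_i^U}(k),\frac{\partial F^*}{\partial y_i^U}(k)\right)$, which satisfies $\|\mathbf{g}_i(k)\|\le\sqrt{2}\,b$ by (\ref{thm_condition}) (the uniform $\max_{i,k}$ form of that hypothesis is what lets the inter-agent coupling in $F^*$ be absorbed into a single bound). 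Combining the update (\ref{update}) with $\kappa_i=v_{\max}$, $\Delta t=1$, and the target displacement $\boldsymbol{\delta}_i(k):=(\delta_{2i-1}(k),\delta_{2i}(k))$ of (\ref{newdecre1}) with $\|\boldsymbol{\delta}_i(k)\|\le\delta$, the error obeys the recursion $\mathbf{e}_i(k+1)=\mathbf{e}_i(k)+v_{\max}\widetilde{\mathbf{d}}^i(k)-\boldsymbol{\delta}_i(k)$, where by the standing assumption of the theorem $\widetilde{\mathbf{d}}^i(k)=-\bigl(\mathbf{e}_i(k)+\mathbf{g}_i(k)\bigr)\big/\|\mathbf{e}_i(k)+\mathbf{g}_i(k)\|$.

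The heart of the argument is the alignment estimate for $\langle\mathbf{e}_i,\widetilde{\mathbf{d}}^i\rangle$. Writing $\mathbf{s}:=\mathbf{e}_i+\mathbf{g}_i$, one has $\langle\mathbf{e}_i,\widetilde{\mathbf{d}}^i\rangle=-\|\mathbf{s}\|+\langle\mathbf{g}_i,\mathbf{s}\rangle/\|\mathbf{s}\|\le-\|\mathbf{s}\|+\sqrt2\,b$, while Cauchy--Schwarz gives $\|\mathbf{s}\|^2\ge r^2-2\sqrt2\,b\,r$, producing the characteristic term $\sqrt{r^2-2\sqrt2\,b\,r}$ that appears in (\ref{thm_2_1}). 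Expanding $\|\mathbf{e}_i(k+1)\|^2$ and bounding every remaining inner product by Cauchy--Schwarz together with $\|\mathbf{g}_i\|\le\sqrt2 b$ and the componentwise bounds on $\boldsymbol{\delta}_i$, I would obtain an upper bound on $\|\mathbf{e}_i(k+1)\|^2-r^2$ in the form of a single scalar function $\Phi(r)$ valid on the region $r\ge v_{\max}$ (the regime where $U_i$ has not yet reached its target). Differentiating and using $v_{\max}>\sqrt2\delta$ to show $\Phi'(r)<0$, the worst case is the left endpoint $r=v_{\max}$; evaluating there and collecting the $b$- and $\delta$-dependent remainders reproduces the left-hand side of (\ref{thm_2_1}), so that $\|\mathbf{e}_i(k+1)\|^2\le\|\mathbf{e}_i(k)\|^2-\varepsilon$, with the upper bound imposed on $b$ in (\ref{thm_2_1}) being precisely what guarantees $\varepsilon>0$. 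Iterating this one-step decrease and invoking nonnegativity then closes the proof: at most $D_i^{U-T}(0)^2/\varepsilon$ steps can occur before $D_i^{U-T}\le v_{\max}$, i.e. before arrival.

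I expect the main obstacle to be exactly the alignment estimate and the subsequent monotonicity analysis of $\Phi$. One must show quantitatively that the gradient perturbation $\mathbf{g}_i$ can deflect the motion direction $\widetilde{\mathbf{d}}^i$ away from the straight-to-target ray $-\mathbf{e}_i/\|\mathbf{e}_i\|$ only by an amount uniformly controlled by $b$, and then verify that although the cross term $-2\langle\mathbf{e}_i,\boldsymbol{\delta}_i\rangle$ grows linearly in $r$, the dominant contribution $-2v_{\max}\sqrt{r^2-2\sqrt2 b\,r}$ keeps $\Phi$ strictly decreasing on $[v_{\max},\infty)$, so that the single worst case $r=v_{\max}$ governs $\varepsilon$. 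The condition $v_{\max}>\sqrt2\delta$ enters in establishing this monotonicity, and the explicit upper bound on $b$ enters in turning the endpoint value into a strictly negative number.
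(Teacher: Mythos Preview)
Your proposal is correct and follows essentially the same route as the paper: both expand $D_i^{U-T}(k+1)^2-D_i^{U-T}(k)^2$ from the update rule, bound the cross terms via Cauchy--Schwarz and the hypothesis $\|\mathbf{g}_i\|_\infty\le b$, use the key estimate $\|\mathbf{s}\|\ge\sqrt{r^2-2\sqrt2\,b\,r}$, and then exploit monotonicity of the resulting scalar bound on $[v_{\max},\infty)$ to reduce to the endpoint $r=v_{\max}$, where (\ref{thm_2_1}) yields the uniform decrement $\varepsilon$. The only cosmetic difference is that you frame the key step as an ``alignment estimate'' and make the monotonicity argument explicit via $\Phi'(r)<0$, whereas the paper substitutes (\ref{direct_p}) directly into the expansion and asserts the endpoint inequality citing $D_i^{U-T}(k)>v_{\max}$ together with the bound on $b$.
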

\begin{proof}
Let $k_i^*$ be the stop time that agent $U_i$ reaches its assigned target, i.e.,
when $k<k_i^*$, agent $U_i$ cannot reach its target which indicates $D_{i}^{U-T}(k) > v_{\rm{max}}$ ; at time $k_i^*$, $D_{i}^{U-T}(k) \leq v_{\rm{max}}$ and agent $U_i$ reaches its target.
We consider the case when $k<k_i^*$.
Let
\begin{flalign*}
 \left\{
\begin{aligned}
&x_i^{U-T}(k):=x_{i}^{U}(k)-x_{\rm{tar},i}^{T}(k)\\
&y_i^{U-T}(k):=y_{i}^{U}(k)-x_{\rm{tar},i}^{T}(k)
\end{aligned}
\right..
\end{flalign*}
By (\ref{update}),
the decrease of $\left(D_{i}^{U-T}\right)^2$ is
\begin{eqnarray}\label{decrease}
&&\left(D_{i}^{U-T}(k+1)\right)^2-\left(D_{i}^{U-T}(k)\right)^2\\
&&= v_{\rm{max}}^2 + \delta_{2i-1}^2(k)+ \delta_{2i}^2(k) -2\frac{v_{\rm{max}}}{\|\mathbf{d}^{i}(k)\|}\left[d_x^i(k)\delta_{2i-1}(k)+ d_y^i(k)\delta_{2i}(k) \right]\nonumber\\
&&~~~+2\left( \frac{v_{\rm{max}}}{\|\mathbf{d}^{i}(k)\|}d_x^i(k)- \delta_{2i-1}(k) \right)x_i^{U-T}(k)+2\left( \frac{v_{\rm{max}}}{\|\mathbf{d}^{i}(k)\|}d_y^i(k)- \delta_{2i}(k) \right)y_i^{U-T}(k).\nonumber
\end{eqnarray}
By (\ref{direct1}), we have
\begin{flalign}\label{direct_p}
 \left\{
\begin{aligned}
&x_i^{U-T}(k)=-\frac{\partial F^*}{\partial x_i^U}(k)-d_x^i(k)\\
&y_i^{U-T}(k)=-\frac{\partial F^*}{\partial y_i^U}(k)-d_y^i(k)
\end{aligned}
\right..
\end{flalign}
Substituting (\ref{direct_p}) into (\ref{decrease}), we obtain
\begin{eqnarray}\label{newdecre}
&&\left(D_{i}^{U-T}(k+1)\right)^2-\left(D_{i}^{U-T}(k)\right)^2\\
&&=v_{\rm{max}}^2 -2 v_{\rm{max}} \|\mathbf{d}^{i}(k)\| + \delta_{2i-1}^2(k)+ \delta_{2i}^2(k) \nonumber\\
&&~~~- 2\frac{v_{\rm{max}}}{\|\mathbf{d}^{i}(k)\|}\left[d_x^i(k)\delta_{2i-1}(k)+ d_y^i(k)\delta_{2i}(k) \right]\nonumber\\
&&~~~-2\Big[\delta_{2i-1}(k) x_i^{U-T}(k)+ \delta_{2i}(k)y_i^{U-T}(k) \Big]\nonumber\\
&&~~~-2\frac{v_{\rm{max}}}{\|\mathbf{d}^{i}(k)\|}\left[d_x^i(k)\frac{\partial F^*}{\partial x_{i}^{U}}(k)+d_y^i(k)\frac{\partial F^*}{\partial y_{i}^{U}}(k) \right].\nonumber
\end{eqnarray}
We now compute each part of the right of (\ref{newdecre}).
For the third line  in (\ref{newdecre}), we obtain
\begin{equation}\label{newdecre2}
- 2\frac{v_{\rm{max}}}{\|\mathbf{d}^{i}(k)\|}\left[d_x^i(k)\delta_{2i-1}(k)+ d_y^i(k)\delta_{2i}(k) \right] \leq  2v_{\rm{max}}\cdot \delta \cdot \frac{|d^i_x(k)|+|d^i_y(k)|}{\|\mathbf{d}^{i}(k)\|} \leq  2\sqrt{2} v_{\rm{max}} \delta.
\end{equation}
For the  fourth line in (\ref{newdecre}), we have
\begin{equation}\label{newdecre3}
-2\Big[\delta_{2i-1}(k) x_i^{U-T}(k)+ \delta_{2i}(k)y_i^{U-T}(k) \Big]\leq 2 \delta \Big[|x_i^{U-T}(k)|+|y_i^{U-T}(k)| \Big]\leq 2\sqrt{2}\delta D_{i}^{U-T}(k) .
\end{equation}
For the last line in (\ref{newdecre}), by (\ref{thm_condition}) we get
\begin{equation}\label{newdecre4}
-2\frac{v_{\rm{max}}}{\|\mathbf{d}^{i}(k)\|}\left[d_x^i(k)\frac{\partial F^*}{\partial x_{i}^{U}}(k)+d_y^i(k)\frac{\partial F^*}{\partial y_{i}^{U}}(k) \right]\leq 2 v_{\rm{max}} \left[\Big|\frac{\partial F^*}{\partial x_{i}^{U}}(k)\Big|+\Big|\frac{\partial F^*}{\partial y_{i}^{U}}(k)\Big| \right] \leq 4 v_{\rm{max}} b.
\end{equation}
Substituting (\ref{newdecre1}) and (\ref{newdecre2})-(\ref{newdecre4}) into (\ref{newdecre}), we have
\begin{equation}\label{newdecre_new}
\left(D_{i}^{U-T}(k+1)\right)^2-\left(D_{i}^{U-T}(k)\right)^2\leq v_{\rm{max}}^2 -2 v_{\rm{max}} \|\mathbf{d}^{i}(k)\| + \delta^2 +2\sqrt{2} v_{\rm{max}} \delta+2\sqrt{2}\delta D_{i}^{U-T}(k)+4 v_{\rm{max}} b.
\end{equation}
Also, by (\ref{direct1}), we get
\begin{eqnarray*}
\|\mathbf{d}^{i}(k)\|^2&&=\left(d_x^i(k)\right)^2+\left(d_y^i(k)\right)^2 \\
&&=\left(D_{i}^{U-T}(k)\right)^2+ \left(\frac{\partial F^*}{\partial x_{i}^{U}}(k)\right)^2+\left(\frac{\partial F^*}{\partial y_{i}^{U}}(k) \right)^2+2\left(\frac{\partial F^*}{\partial x_{i}^{U}}(k)\right)x_i^{U-T}(k)+2\left(\frac{\partial F^*}{\partial y_{i}^{U}}(k)\right)y_i^{U-T}(k).
\end{eqnarray*}
Together this with (\ref{thm_condition}), $D_{i}^{U-T}(k) > v_{\rm{max}}$ and $b \in (0,[v_{\rm{max}}^2-2\delta^2]/[2\sqrt{2}v_{\rm{max}}])$ in Theorem \ref{thm_2}, we obtain
\begin{eqnarray*}
\|\mathbf{d}^{i}(k)\|^2 &&\geq \left(D_{i}^{U-T}(k)\right)^2  -2b\Big[|x_i^{U-T}(k)|+|y_i^{U-T}(k)|\Big]\\
&&\geq \left(D_{i}^{U-T}(k)\right)^2 -2\sqrt{2}b D_{i}^{U-T}(k),
\end{eqnarray*}
which indicates
$$\|\mathbf{d}^{i}(k)\|\geq \sqrt{1-\frac{2\sqrt{2}b}{D_{i}^{U-T}(k)}}D_{i}^{U-T}(k).$$
Therefore, (\ref{newdecre_new}) can be bounded by
\begin{eqnarray}\label{newdecre_nn}
&&\left(D_{i}^{U-T}(k+1)\right)^2-\left(D_{i}^{U-T}(k)\right)^2\\
&&\leq v_{\rm{max}}^2 -\left(2 v_{\rm{max}} \sqrt{1-\frac{2\sqrt{2}b}{D_{i}^{U-T}(k)}}-2\sqrt{2}\delta\right)D_{i}^{U-T}(k) + \delta^2 +2\sqrt{2} v_{\rm{max}} \delta +4 v_{\rm{max}} b.\nonumber\\
&&\leq v_{\rm{max}}^2 -\left(2 v_{\rm{max}} \sqrt{1-\frac{2\sqrt{2}b}{v_{\rm{max}}}}-2\sqrt{2}\delta\right)v_{\rm{max}} + \delta^2 +2\sqrt{2} v_{\rm{max}} \delta +4 v_{\rm{max}} b \nonumber\\
&&=\left( 1- 2\sqrt{1-\frac{2\sqrt{2}b}{v_{\rm{max}}}}-\frac{2\sqrt{2}\delta}{v_{\rm{max}}}\right)v_{\rm{max}}^2 + \delta^2 +2\sqrt{2} v_{\rm{max}} \delta +4 v_{\rm{max}} b = -\varepsilon,\nonumber
\end{eqnarray}
where the second inequality uses the fact $D_{i}^{U-T}(k) > v_{\rm{max}}$ and $b<\frac{v_{\rm{max}}^2-2\delta^2}{2\sqrt{2}v_{\rm{max}}}$,
and the last equation uses (\ref{thm_2_1}).
 Using (\ref{newdecre_nn}) repeatedly, we can get $\left(D_{i}^{U-T}(k_i^*)\right)^2\leq \left(D_{i}^{U-T}(0)\right)^2-k_i^* \varepsilon,$ while $\left(D_{i}^{U-T}(k_i^*)\right)^2>0$. Therefore, we have  $$k_i^* <\frac{\left(D_{i}^{U-T}(0)\right)^2}{\varepsilon},~i=1,\dots,N,$$
%We denote $$k^* :=\max_i\{k_i^*\} < \frac{1}{\varepsilon}\max_{i}\{\left(D_{i}^{U-T}(0)\right)^2\}.$$
which means that each agent $U_i$ can reach its target within time $\left(D_{i}^{U-T}(0)\right)^2/\varepsilon$.
\end{proof}

The condition (\ref{thm_condition}) depends on the system state $\mathbf{X}(k)$. In fact, this condition can be replaced by a parameter condition of neural network which does not depend on the system state.

\begin{corollary}\label{corollary}
Theorem \ref{thm_2} still holds by using the neural network condition
\begin{equation}\label{coro_condi}
\max_{i=1,\dots,N+M}\left\{\sum_{j=1}^J |\lambda_j \omega_{2i-1}^j|,\sum_{j=1}^J |\lambda_j \omega_{2i-1}^j|\right\}\leq 4b
\end{equation}
instead of (\ref{thm_condition}).
\end{corollary}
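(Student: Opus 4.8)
The plan is to reduce the corollary to Theorem \ref{thm_2} by showing that the parameter condition (\ref{coro_condi}) implies the state-dependent condition (\ref{thm_condition}). The key observation is that the entire proof of Theorem \ref{thm_2} invokes (\ref{thm_condition}) at exactly one place, namely the estimate (\ref{newdecre4}) bounding the last line of (\ref{newdecre}); every other step---the decomposition (\ref{decrease}), the substitution (\ref{direct_p}), the bounds (\ref{newdecre2})--(\ref{newdecre3}), and the lower bound on $\|\mathbf{d}^{i}(k)\|$---is independent of the particular value of $b$. Hence, once I verify that (\ref{coro_condi}) forces $\bigl|\partial F^*/\partial x_i^U(k)\bigr|\le b$ and $\bigl|\partial F^*/\partial y_i^U(k)\bigr|\le b$ for all $i$ and $k$, the same chain of inequalities yields the telescoping estimate (\ref{newdecre_nn}) and the finite-time bound $k_i^*<\left(D_i^{U-T}(0)\right)^2/\varepsilon$ without any further change.

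First I would compute the partial derivatives of $F^*$ explicitly by the chain rule. Since $x_i^U$ and $y_i^U$ are the $(2i-1)$-th and $(2i)$-th components of the input vector $\mathbf{X^{U,O}}$, differentiating (\ref{aprfun}) gives
\begin{equation*}
\frac{\partial F^*}{\partial x_i^U}(k)=\sum_{j=1}^J \lambda_j\,\omega_{2i-1}^j\,\sigma'\!\left(\mathbf{X^{U,O}}(k)^{\top}\mathbf{\omega^j}+\theta_j\right),
\end{equation*}
and the analogous expression with $\omega_{2i}^j$ for $\partial F^*/\partial y_i^U$. Here $\sigma'(s)=\sigma(s)\bigl(1-\sigma(s)\bigr)=\exp(-s)/[1+\exp(-s)]^2$, which matches the kernel already appearing in the displayed formulas for $d_x^i(t)$ and $d_y^i(t)$ in Subsection \ref{Subsec_plan}. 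The essential analytic fact is the uniform bound $0<\sigma'(s)\le \tfrac14$ for every $s\in\mathbb{R}$, obtained by maximizing $\sigma(1-\sigma)$ over $\sigma\in(0,1)$ at $\sigma=\tfrac12$.

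Combining these two ingredients, I would bound
\begin{equation*}
\Big|\frac{\partial F^*}{\partial x_i^U}(k)\Big|\le \sum_{j=1}^J |\lambda_j\,\omega_{2i-1}^j|\cdot\sigma'(\cdot)\le \frac14\sum_{j=1}^J |\lambda_j\,\omega_{2i-1}^j|\le \frac14\cdot 4b=b,
\end{equation*}
and symmetrically for the $y$-derivative, where the last inequality is precisely (\ref{coro_condi}). Since (\ref{coro_condi}) ranges over $i=1,\dots,N+M$ it certainly controls the agent coordinates $i=1,\dots,N$ that the proof actually needs, so (\ref{thm_condition}) holds with the same $b$ and Theorem \ref{thm_2} applies unchanged. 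The argument is essentially routine; the only points requiring care are the sharp constant $\tfrac14$ for $\sigma'$ (the source of the factor $4b$ in (\ref{coro_condi})) and correctly matching each input coordinate to its weight index $\omega_{2i-1}^j$ versus $\omega_{2i}^j$---indeed the second summand in (\ref{coro_condi}) should read $\sum_{j=1}^J|\lambda_j\,\omega_{2i}^j|$ rather than a repeated $\omega_{2i-1}^j$, which I take to be a typographical slip.
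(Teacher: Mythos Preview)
Your proposal is correct and follows essentially the same approach as the paper: both arguments verify that (\ref{coro_condi}) implies (\ref{thm_condition}) by differentiating $F^*$ explicitly and using the uniform bound $\sigma'(s)\le\tfrac14$ on the sigmoid derivative, then appeal to Theorem \ref{thm_2}. Your observation that the second summand in (\ref{coro_condi}) should read $\sum_{j=1}^J|\lambda_j\,\omega_{2i}^j|$ is also correct; the printed statement contains a typographical repetition.
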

\begin{proof}
We note that for any $\mathbf{X} \in \mathbb{R}^{2N+2M}$,
\begin{eqnarray*}
\begin{aligned}
\Big|\frac{\partial F^*}{\partial x_i^U}(\mathbf{X})\Big|&=\Big|\sum_{j=1}^J  \frac{\lambda_j \omega_{2i-1}^j \exp(-\mathbf{X}^{\top}\mathbf{\omega^j}-\theta_j)}{[1+ \exp(-\mathbf{X}^{\top}\mathbf{\omega^j}-\theta_j)]^2}\Big| \leq \frac{1}{4} \sum_{j=1}^J |\lambda_j \omega_{2i-1}^j| \leq b, ~~i=1,\dots,N,
\end{aligned}
\end{eqnarray*}
where the last inequality uses (\ref{coro_condi}). Similarly we have $\Big|\frac{\partial F^*}{\partial y_i^U}(\mathbf{X})\Big| \leq b$, so the condition (\ref{thm_condition}) in Theorem \ref{thm_2} still holds.
\end{proof}

\section{Simulation Results}\label{Simulations}
This section presents some simulations to test the algorithm proposed in this paper.
Our offline simulations are carried out on a computer platform with Intel i7 CPU (2.9 GHZ).
We consider an area of $200$km$\times$ $200$km with four radar-missiles. Ten agents need to travel from their initial positions and track five targets. The agents need to avoid collision and antagonistic radar-missiles. The radar detection radius $R_d$ is $10$km and the missile attack radius $R_a$ is $5$km. The secure distance between agents is $100$m.
 It is unnecessary for the agents to change targets frequently, then we call the weighted Hungarian method to assign targets for each agent at the initiation and every five iterations. The  dynamic positions of agents, radar-missiles and targets are updated at each iteration and each iteration is chosen to be $5$s. Each agent is allowed to travel at the maximum velocity of $v_{\rm{max}} = 60$ m/s, therefore we let the step size $\kappa = 0.3$km.  The maximum rate of change of heading angle $n_{\rm{max}}$ is $10$. The  maximum moving range for each agent is $500$km. The penalty factors are separately set as $k_d = k_a = 10^5, k_c = 10^4, k_l = 10^4$, and these parameters satisfy the conditions proposed in Theorem \ref{thm_2}. The balance between these parameters are adjusted and tested empirically.

For the set-up of the BP neural network, a network with 14 input neurons, one output neuron, and one hidden layer with 75 neurons was chosen. The maximum number of training is set as 1000. We randomly generated 100000 sets of dynamic positions $\mathbf{X^{U,O}}$ and calculate their corresponding output value. For the $f_4$ part, we obtain the output according to its historical positions.
Then, we choose $70\%$ of them as training data, $15\%$ of them as validation data, and $15\%$ of them as test data. The training MSE (Mean Square Error) is 0.0002449, which is shown in the Figure \ref{Fig:performance}.

\begin{figure}[!t]
	\centering
	\includegraphics[height=2.2in, width=2.8in]{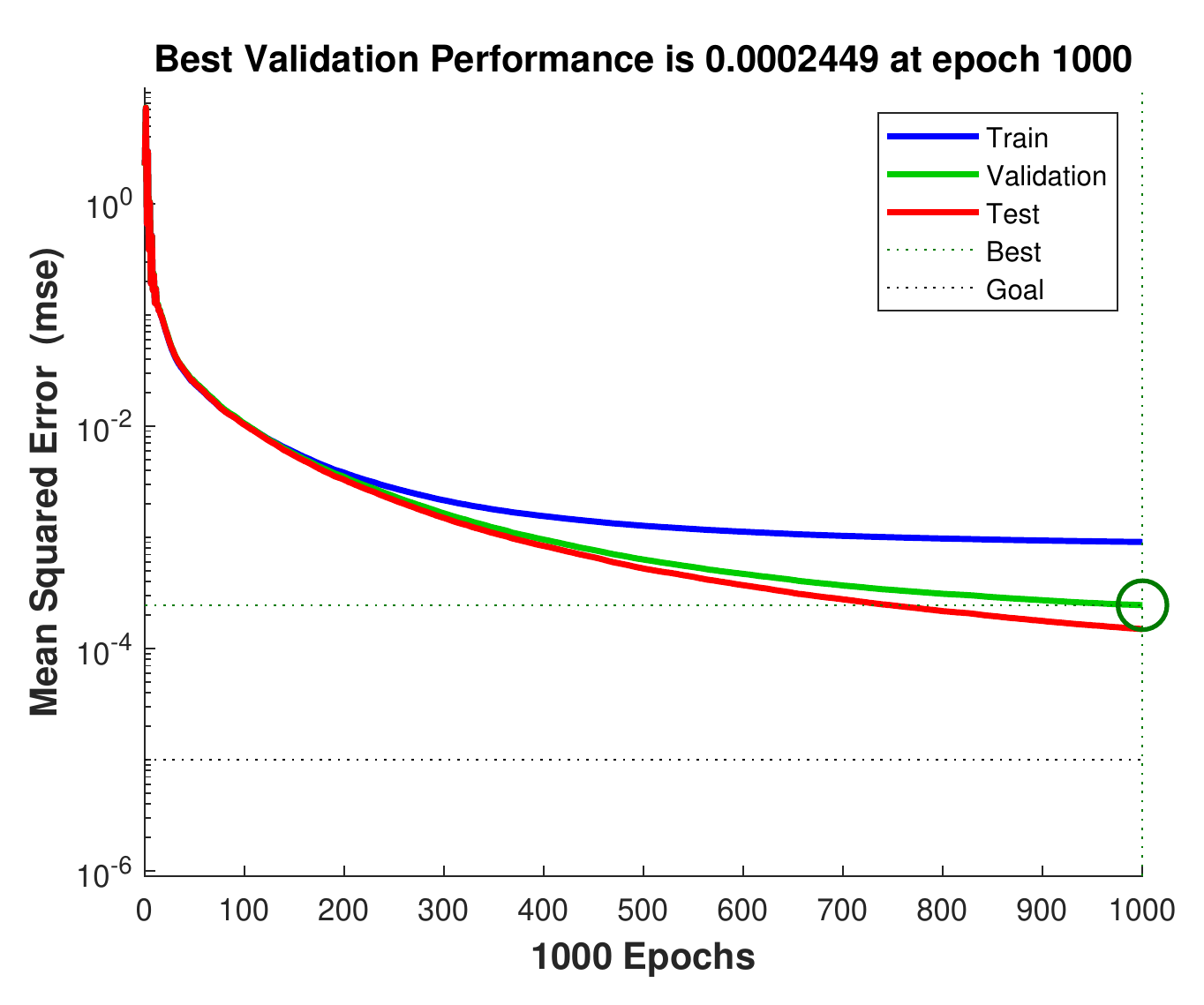}\\
	\caption{The performance of this three-layer BP neural network. }\label{Fig:performance}
\end{figure}

With the well-trained weights of this neural network, we apply our algorithm and conduct several simulations.
First, we compare our algorithm (\ref{update}) with the negative gradient method of the original objective function (\ref{opt_pro}), that is, we use $H$ and $F$ instead of $H^*$ and $F^*$ in Subsection~\ref{Subsec_plan}.
(The derivatives of $F(t)$ in discontinuous points are Dirac Delta functions.) To make the comparison more intuitively, we assume the targets and radar-missiles are static.
In an example of contrast simulations, the paths of agents are shown in Figure \ref{Fig:compare}, where the blue circle represents radar detection threat, and the red circle represents missile attack threat. We use `$\ast$' to represent the initial position of an agent and use `$+$' to represent a target.

\begin{figure}
  \centering
    \subfloat[]{
      \includegraphics[height=2.2in, width=2.2in]{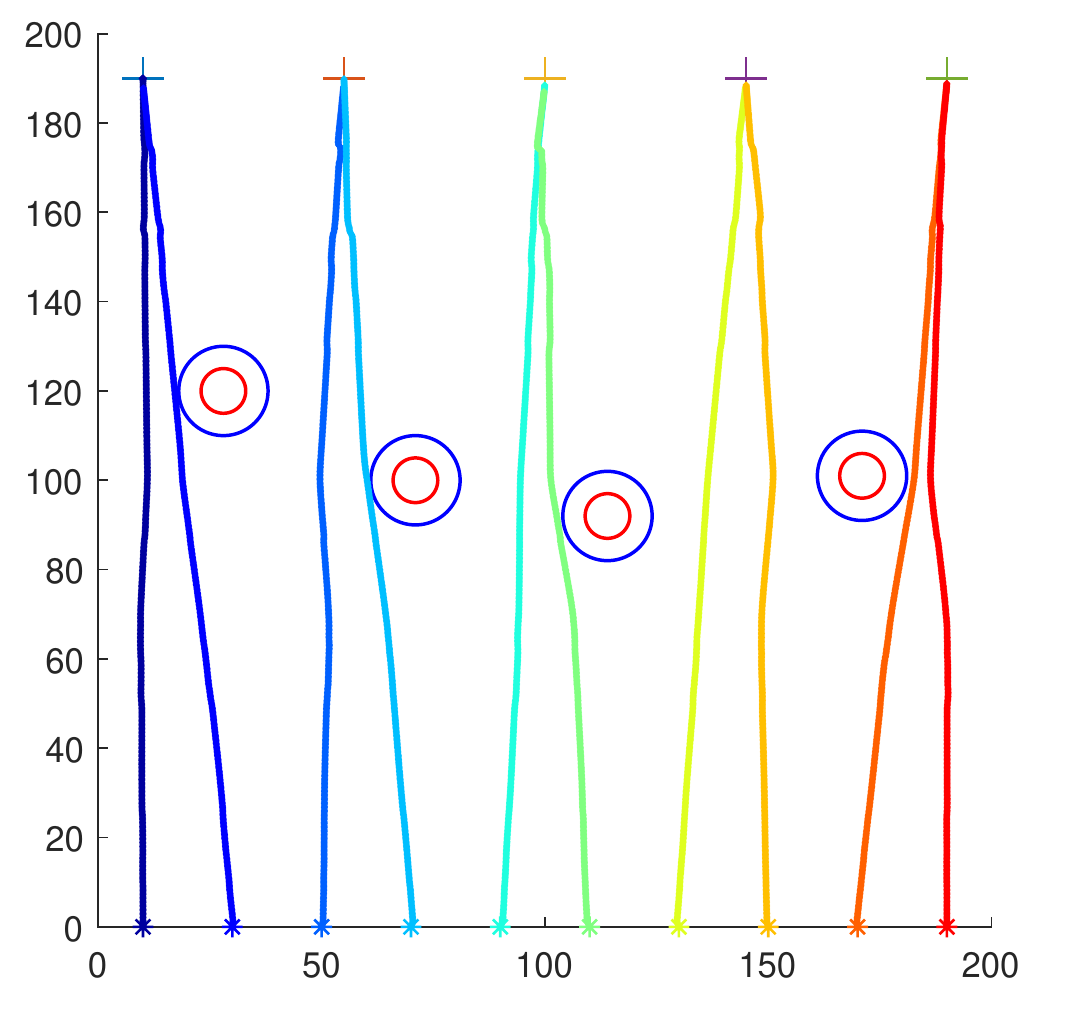}}
    \subfloat[]{
      \includegraphics[height=2.2in, width=2.2in]{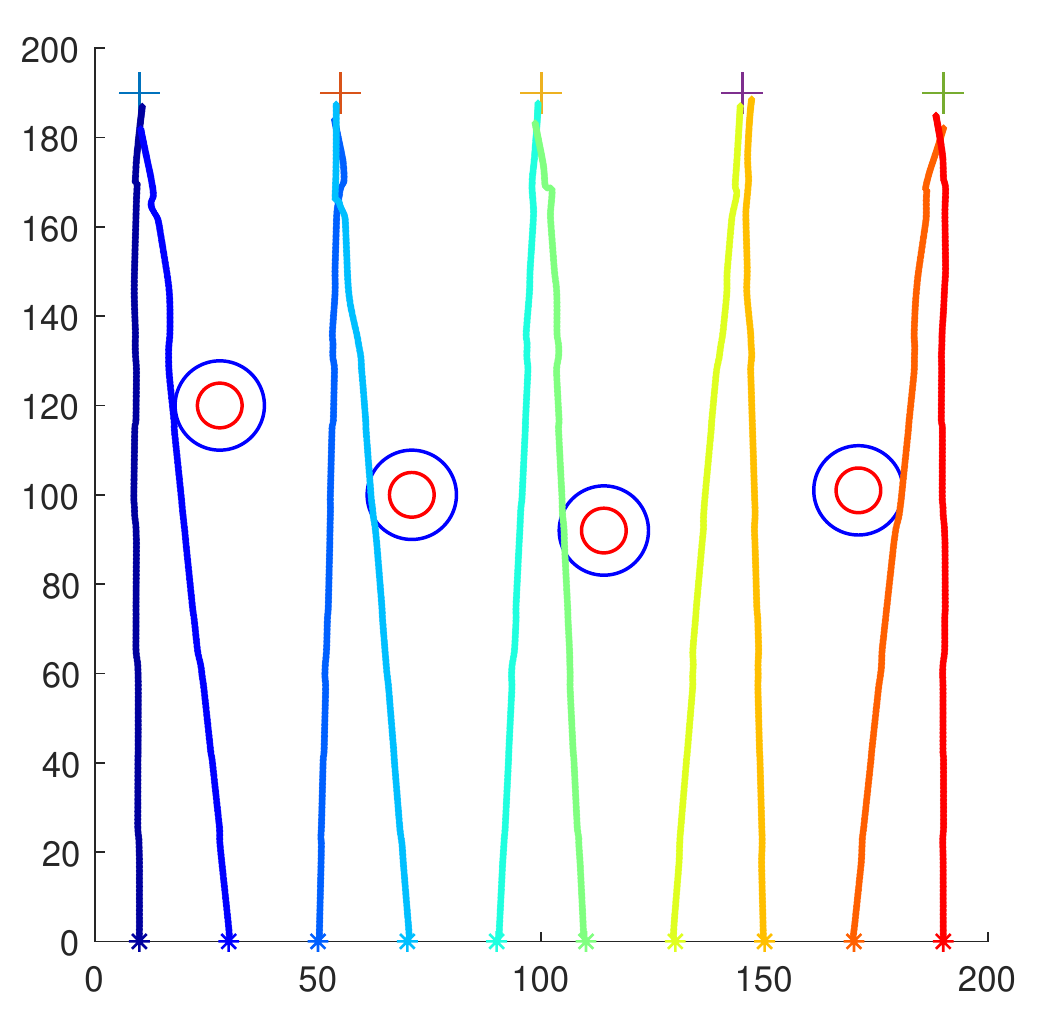}}
\caption{(a) The paths of agents with our algorithm (\ref{update}); (b) The paths of agents with the negative gradient method of the original objective function (\ref{newobjfun}).}
\label{Fig:compare}
\end{figure}

Figure \ref{Fig:compare} shows that all agents with our algorithm (\ref{update}) do not enter the detection region of radars, while agents 4, 6 and 9 enter the detection region of the radar if all agents choose their directions by the negative gradient method of the original objective function. From Figure \ref{Fig:compare}, our algorithm is more sensitive than the negative gradient method of the original objective function, for that the objective function in our algorithm is a smooth function with better gradient quality.

We also conduct some simulations to test the performance of our algorithm in the dynamic environment. We assume all targets and radar-missiles move randomly at a rate of $10$m/s.
\begin{figure}
    \centering
    \subfloat[]{
        \includegraphics[height=1.5in, width=1.5in]{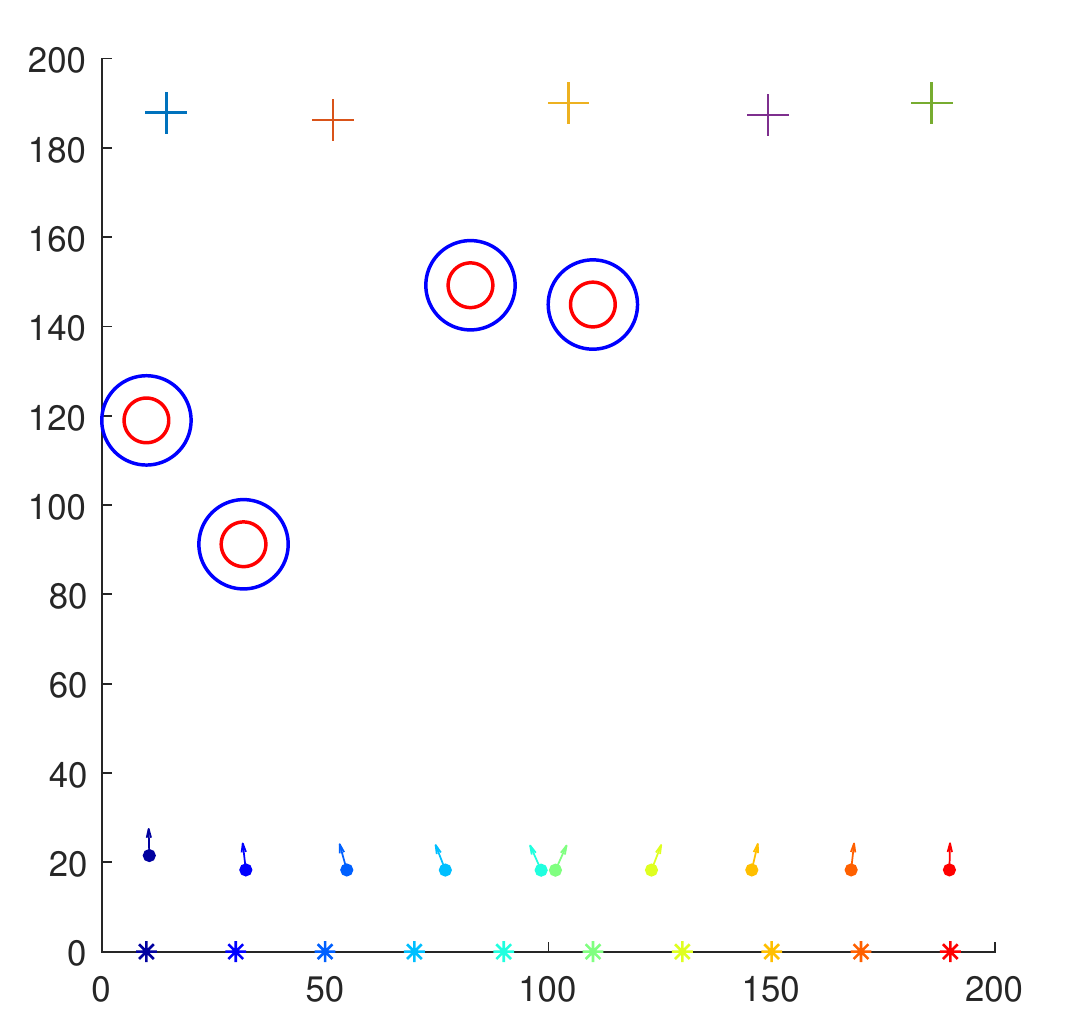}}
    \subfloat[]{
        \includegraphics[height=1.5in, width=1.5in]{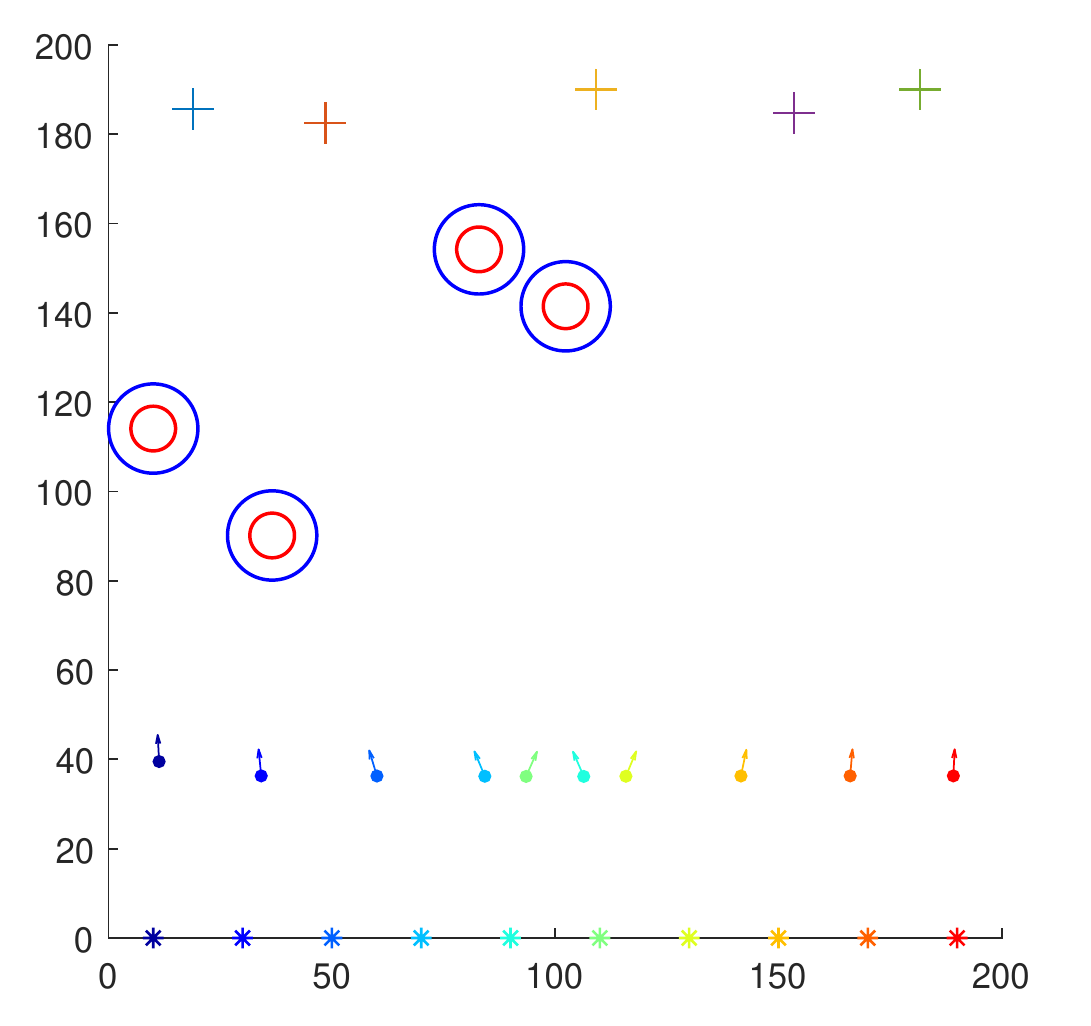}}
    \subfloat[]{
        \includegraphics[height=1.5in, width=1.5in]{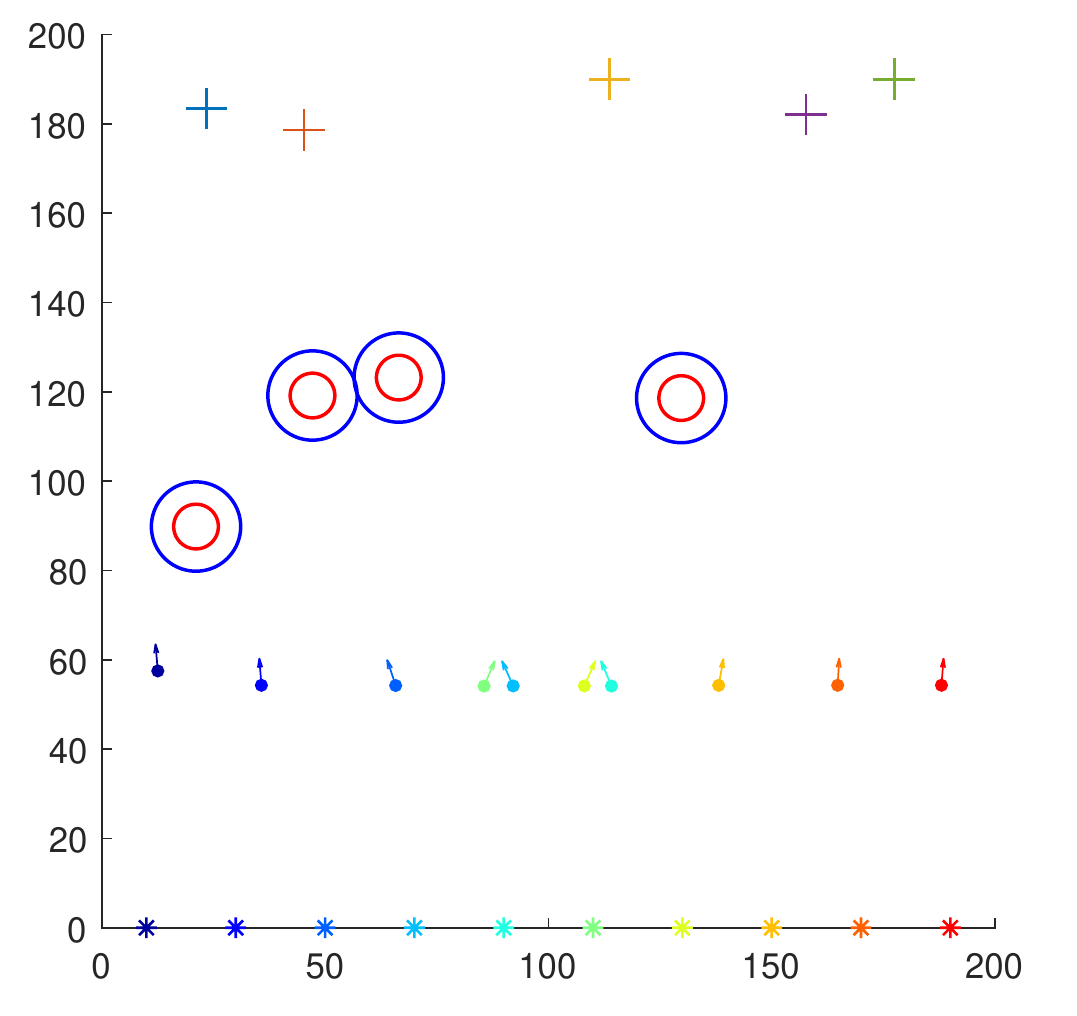}}
    \subfloat[]{
        \includegraphics[height=1.5in, width=1.5in]{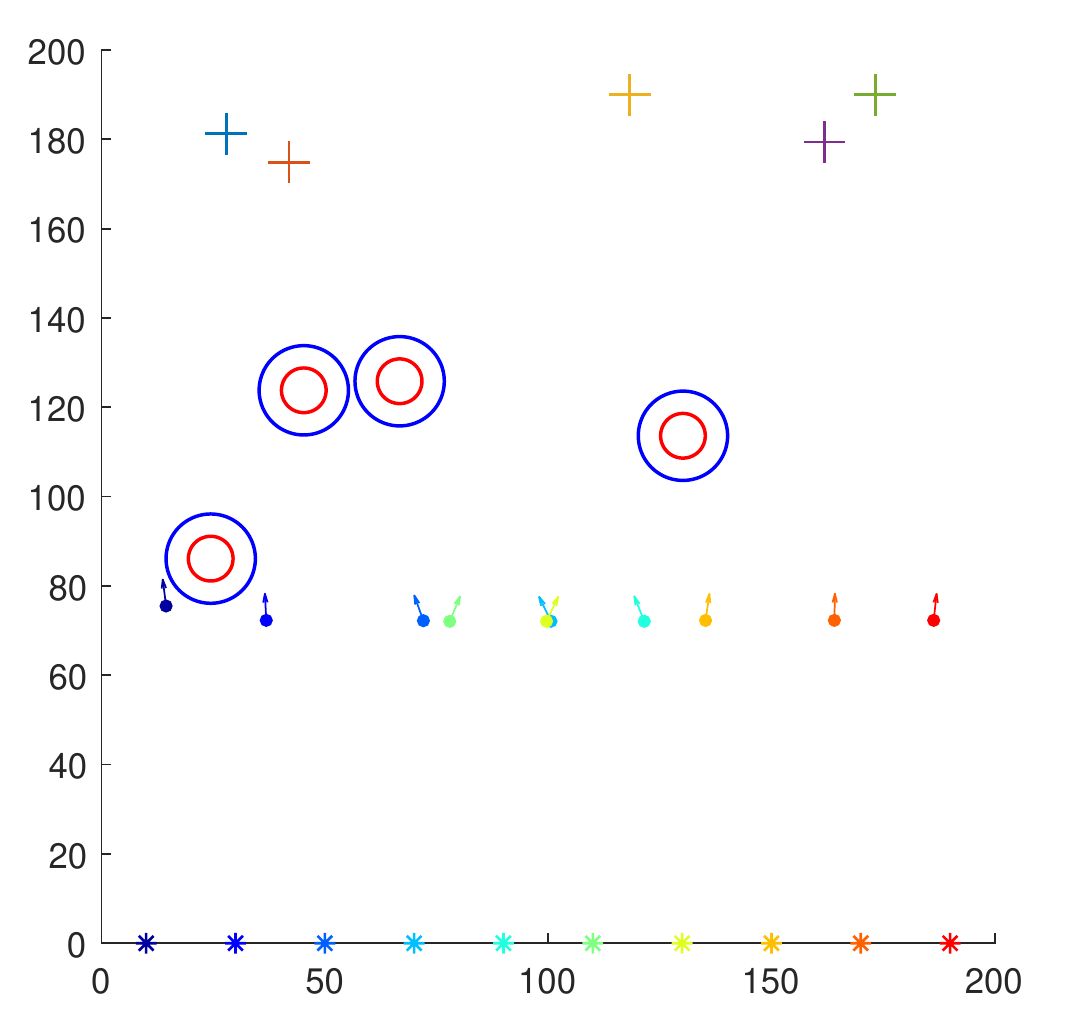}}
    \\
    \subfloat[]{
        \includegraphics[height=1.5in, width=1.5in]{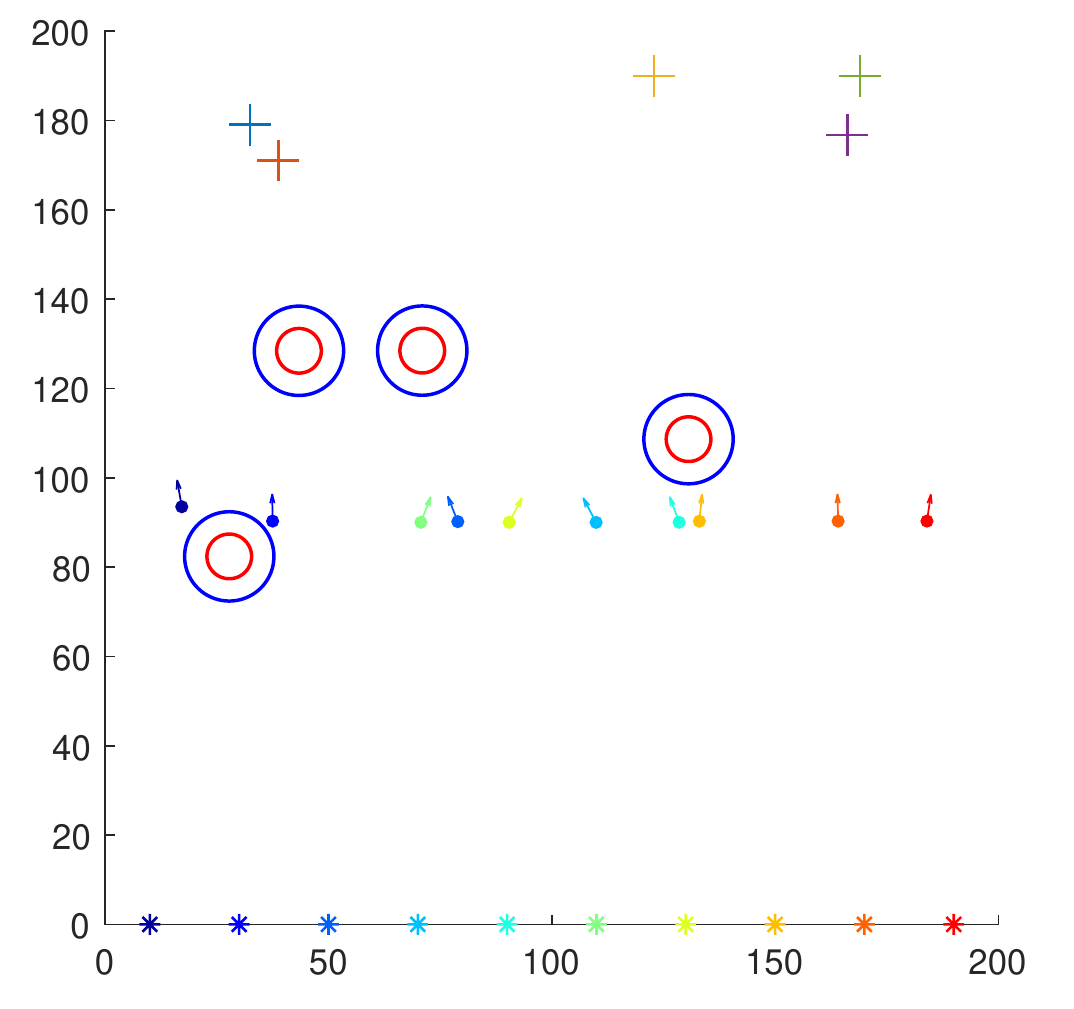}}
    \subfloat[]{
        \includegraphics[height=1.5in, width=1.5in]{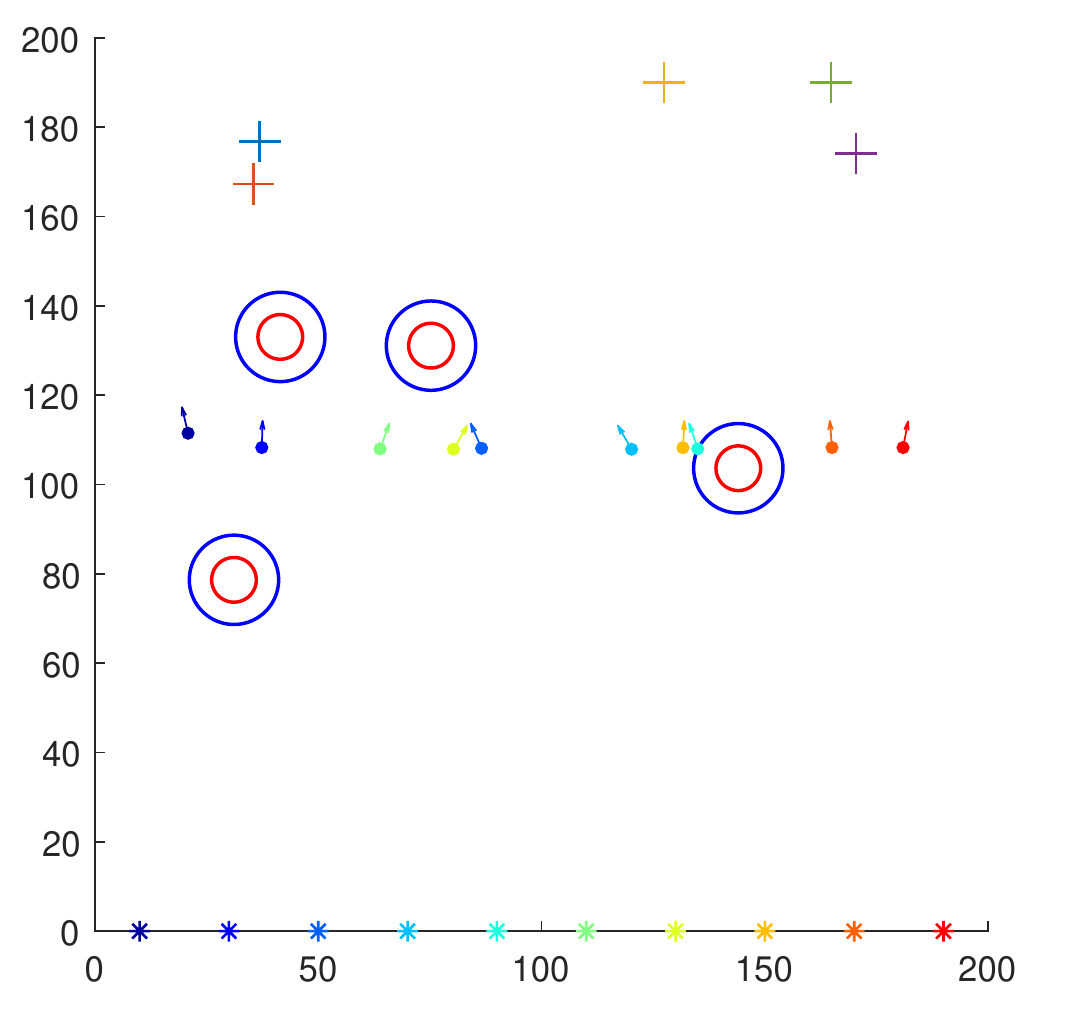}}
    \subfloat[]{
        \includegraphics[height=1.5in, width=1.5in]{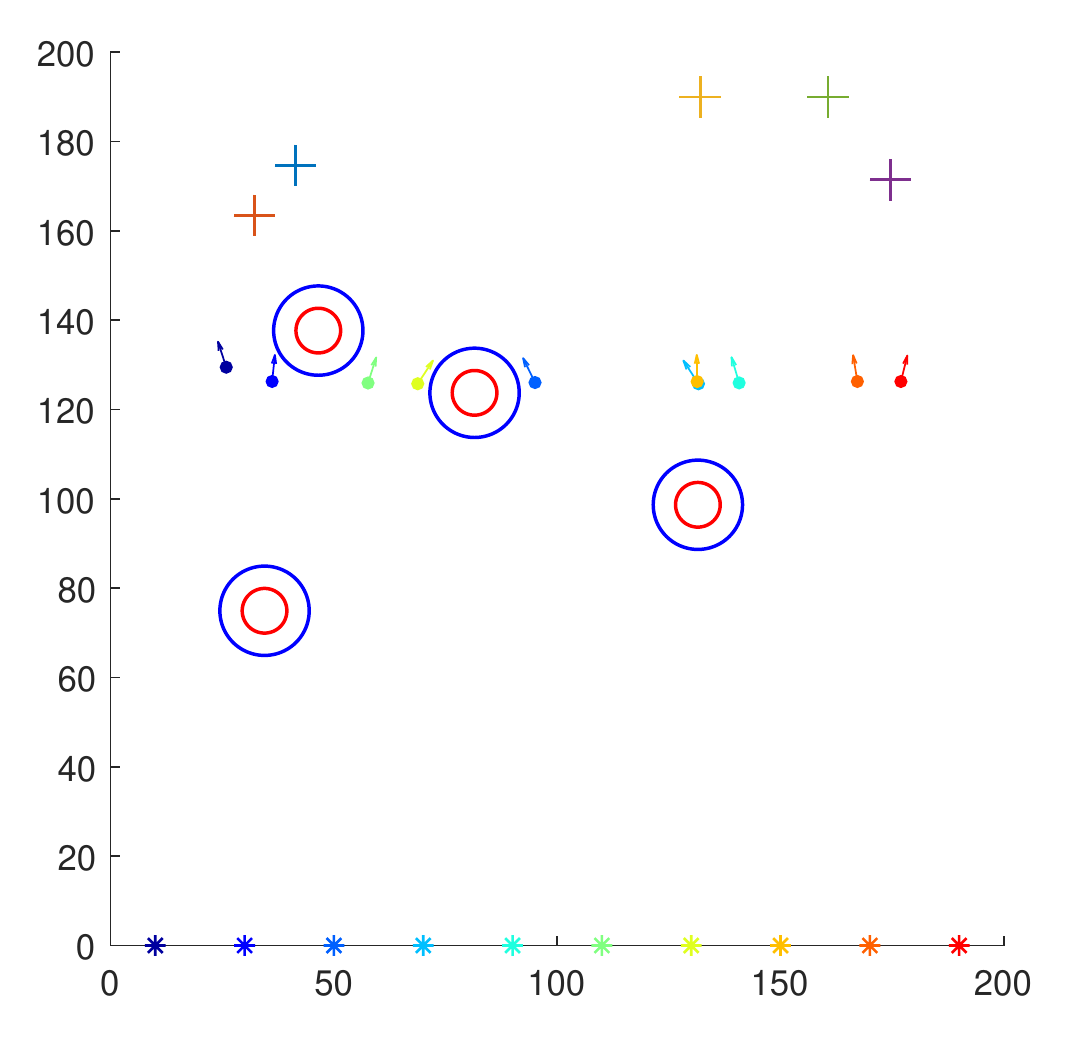}}
    \subfloat[]{
        \includegraphics[height=1.5in, width=1.5in]{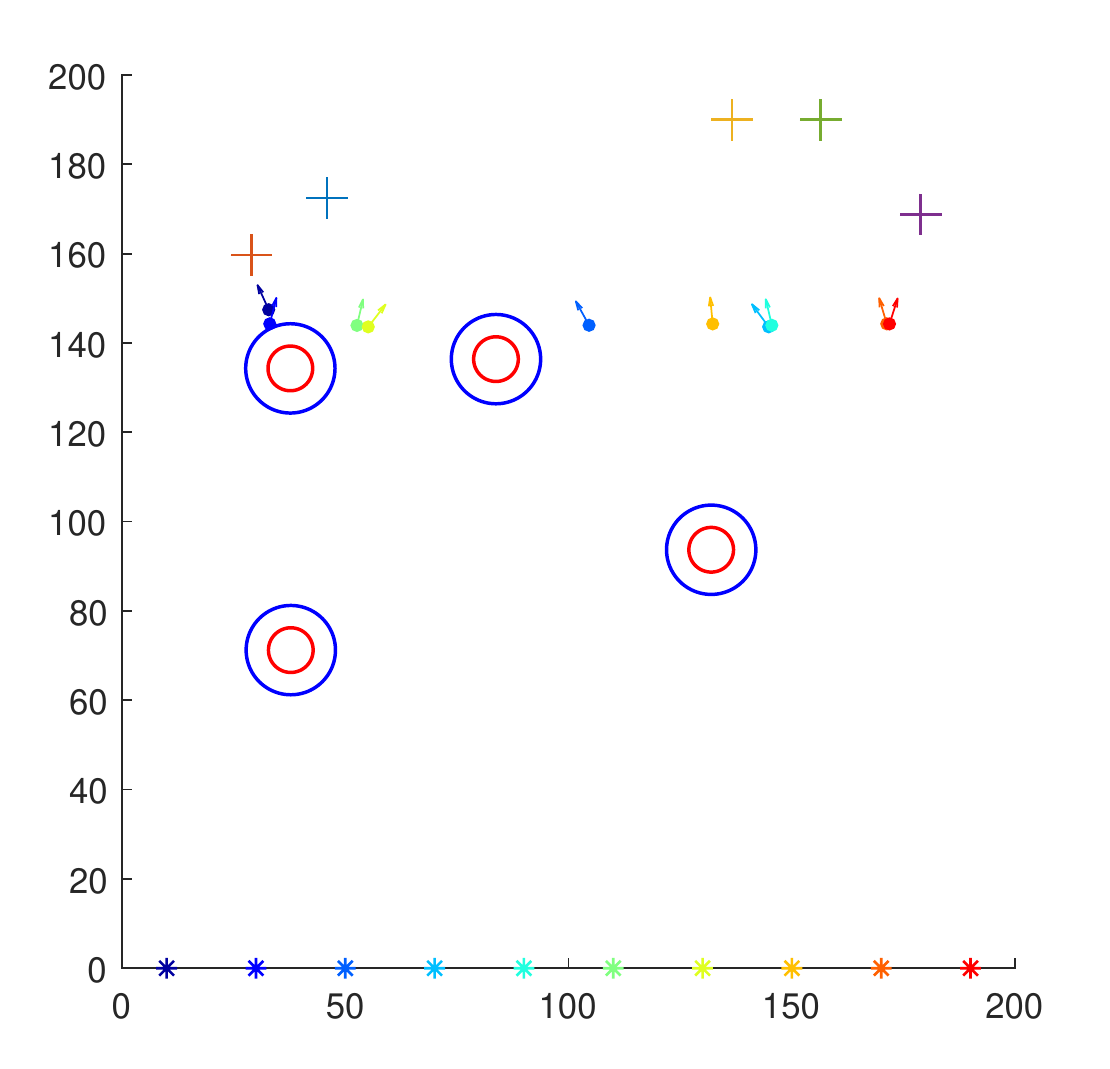}}
    \\
    \subfloat[]{
        \includegraphics[height=1.5in, width=1.5in]{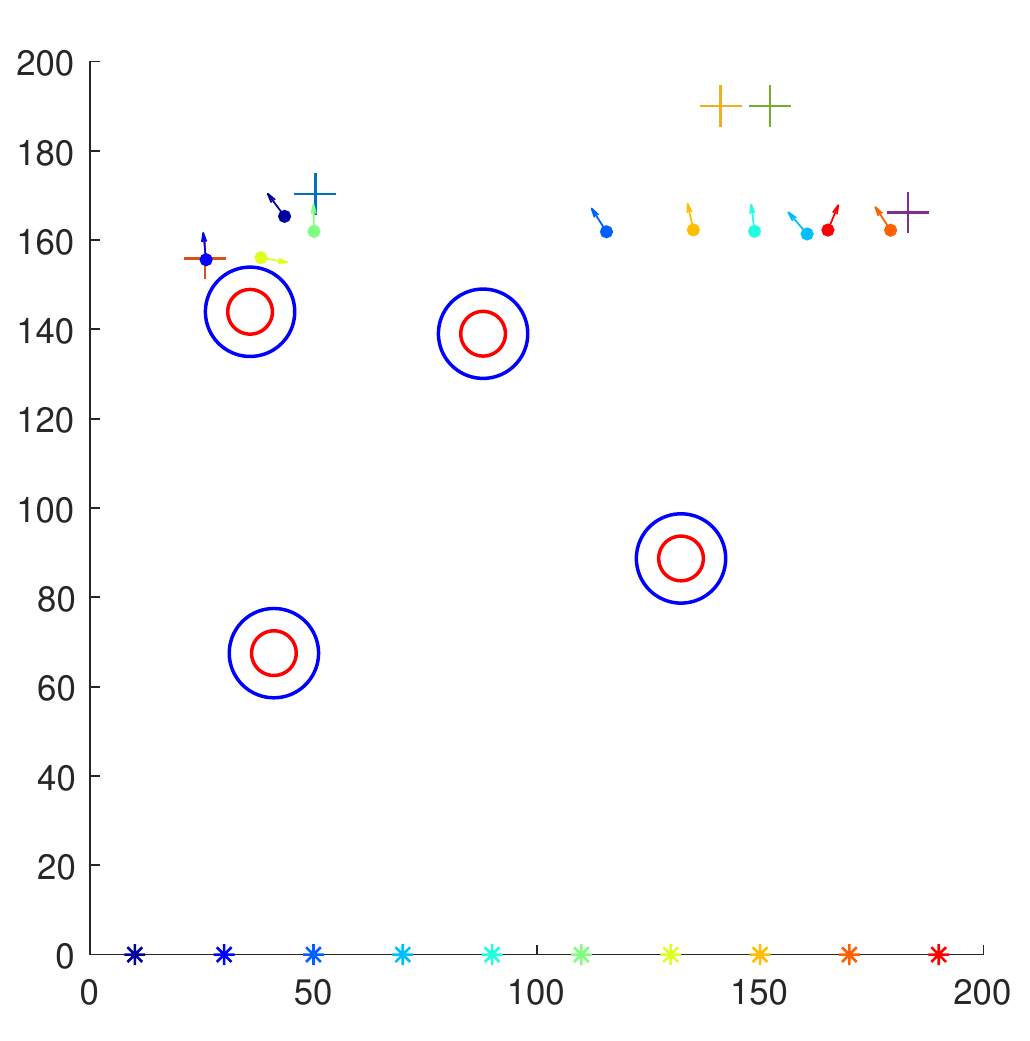}}
    \subfloat[]{
        \includegraphics[height=1.5in, width=1.5in]{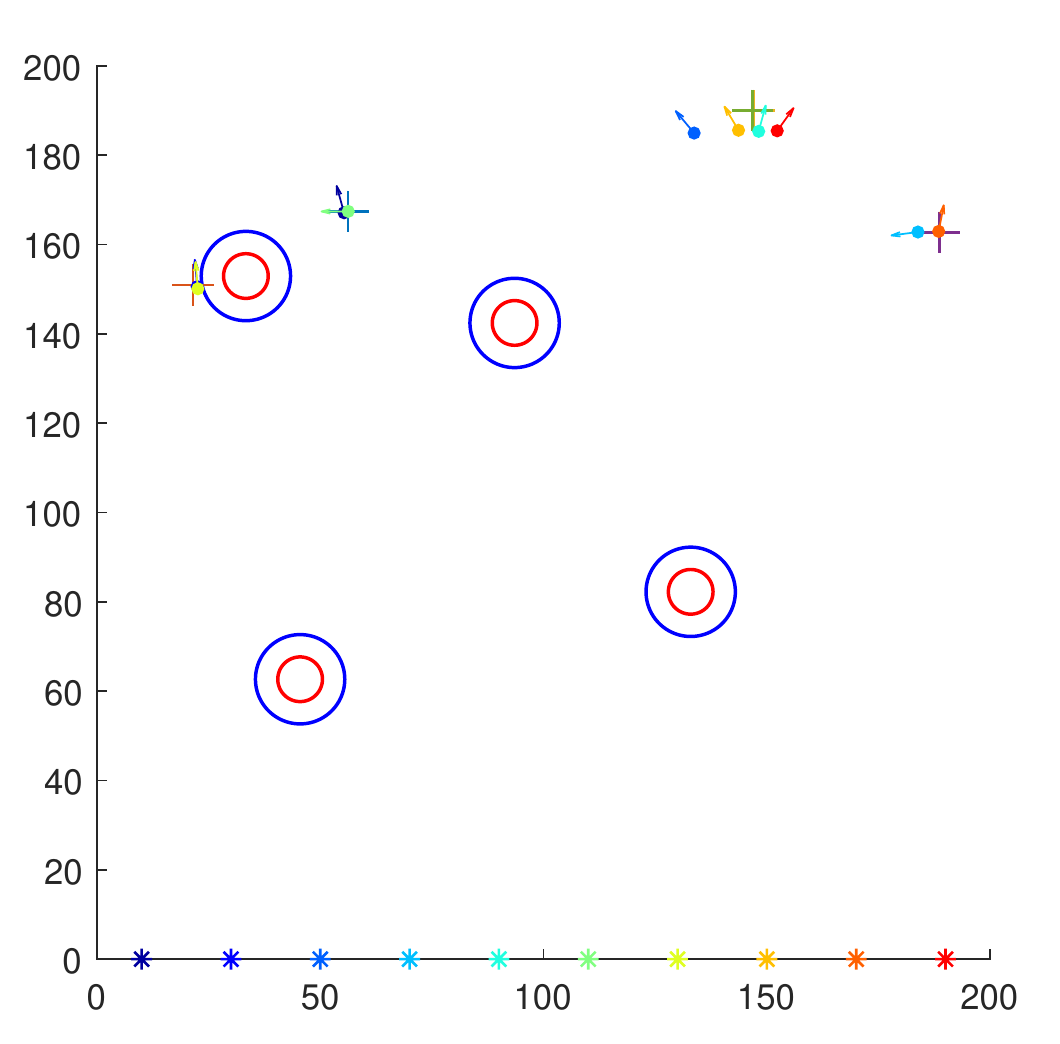}}
\caption{The movement of the agents, targets and radar-missiles in the dynamic environment. (a) At time $t = 100$; (b) At time $t = 200$; (c) At time $t = 300$; (d) At time $t = 400$; (e) At time $t = 500$; (f) At time $t = 600$; (g) At time $t = 700$; (h) At time $t = 800$; (i) At time $t = 900$; (j) At time $t = 1007$, all targets are captured.}
\label{Fig:paths}
\end{figure}
We run our algorithm (\ref{update}) and all agents can reach targets in 1007 time steps. The positions of agents, targets, and radar-missiles at different times are shown in the Figure \ref{Fig:paths}.
To offer more intuitionistic information of our algorithm's performance, we retain the agents that have reached their targets in the Figure \ref{Fig:paths}, and they still track their targets dynamically.
The computational time of each step is shown in the Figure \ref{Fig:time}.
After running in a desktop computer with Intel i7 CPU (2.9 GHZ), the mean computational time of all the steps is 0.0036 s, and the maximum single-step computational time is 0.0125 s, which indicates the low computational requirement of our algorithm.
\begin{figure}[!t]
	\centering
	\includegraphics[height=2.2in, width=2.8in]{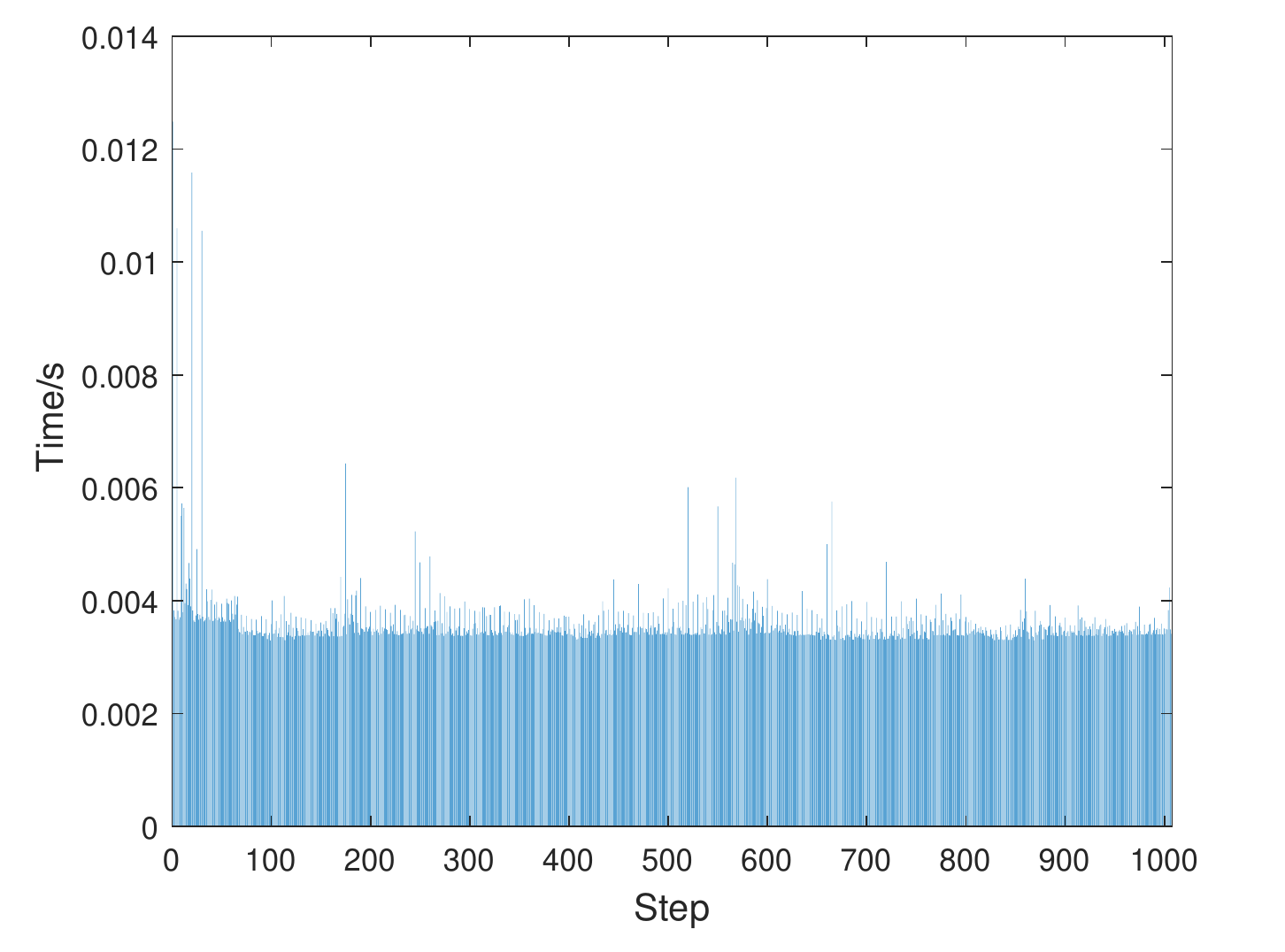}\\
	\caption{The single step computational time of our algorithm. }\label{Fig:time}
\end{figure}

\section{Conclusions}\label{Conclusions}
Real-time path optimization problem is very important to the practical application of USs. However, it is also a challenging problem under fast time-varying and poor communication environments. In this paper, we propose a one-way broadcast communication based algorithm using the weighted Hungarian method, approximation theory with neural networks and modified gradient optimization techniques.
One-way broadcast communication mode is more feasible and applicable under poor communication environments. On this basis, our algorithm can assign targets and find optimal paths for agents in USs in real time. This algorithm can improve the adaptability of agents under the assumption of mobile targets, radars (or sonar), and missiles. From simulations we observe that this algorithm can generate reasonable and optimal paths in a short interval of computational time.

Since the gradient descent method often leads to local optimum, target assignment and path planning are decoupled in our algorithm, and the optimal parameter selection needs further consideration, we need to make more improvements to get closer to the optimal solution, which is hard to achieve due to a lot of uncertainties.
Moreover,  the selection of neural network may need further consideration to enhance the accuracy and convergence rate.
Besides, our simulations are carried out on a computer platform at present, and the following work of embedded control system needs to be assumed and marked out.

\appendix[The weighted Hungarian method]\label{Hungarian method}
First, we review several key definitions\cite{kuhn1955hungarian,konig1931graphs}:
\begin{itemize}

\item[(1)] \textit{Bipartite graph:} A bipartite graph is a special kind of graph. If a graph $\mathcal{G}=(\mathcal{V},\mathcal{E})$, where the vertex set $\mathcal{V}$ can be divided into two disjoint sets $\mathcal{R}$ and $\mathcal{P}$ , and no vertices in the same set are adjacent, the graph $\mathcal{G}$ is a bipartite graph.

\item[(2)] \textit{Matching:} A matching $\mathcal{M}$ of graph $\mathcal{G}$ is a subgraph, in which any two edges have no common vertex.

\item[(3)] \textit{Matched edge} and \textit{matched vertex}: Let  $\mathcal{M}$ be a given matching. All edges in $\mathcal{M}$ are called matched edges, and all endpoints of matched edges are called matched vertices.
	 The other edges and vertices are called unmatched edges and unmatched vertices respectively.

\item[(4)] \textit{Maximum cardinality matching:} Among all matchings in a graph, the matching that contains the most edges is the maximum cardinality matching.

\item[(5)] \textit{Perfect matching:} If the endpoints of a matching $\mathcal{M}$ contain all vertices of a graph, the  matching $\mathcal{M}$ is called a perfect matching. A perfect matching must be a maximum cardinality matching.

\item[(6)] \textit{Alternating path:} An alternating path with respect to a matching $\mathcal{M}$ is a path in which edges alternate between those in $\mathcal{M}$ and those not in $\mathcal{M}$.

\item[(7)] \textit{Alternating path:} An alternating path with respect to a matching $\mathcal{M}$ is a path in which edges alternate between those in $\mathcal{M}$ and those not in $\mathcal{M}$.

\item[(8)] \textit{Augmenting path:} An augmenting path is an alternating path that starts and ends at unmatched vertices, so the number of unmatched edges is one more than the number of matched edges.

\item[(9)] \textit{Vertex cover} and \textit{minimum vertex cover:} A vertex cover of a graph is a set of vertices that includes at least one endpoint of every edge of the graph. A minimum vertex cover is a vertex cover with the smallest possible size.

\item[(10)] \textit{Feasible labeling of vertex:} Let $\mathcal{G}=(\mathcal{V},\mathcal{E},\omega)$ be a weighted graph in which every edge $(i,j)\in\mathcal{E}$ has a weight $\omega(i,j)\in\mathbb{R}$. A vertex labeling $l: \mathcal{V}\rightarrow [0,\infty)$ is a nonnegative function which labels each vertex of $\mathcal{G}$.
	A labeling $l$ is called a feasible labeling if $l(i) + l(j) \leq \omega(i,j)$ for every edge $(i,j)$  in $\mathcal{G}$.

\item[(11)] \textit{Equality subgraph:} Let $\mathcal{G}=(\mathcal{V},\mathcal{E},\omega)$ be a weighted graph and $l$ be a feasible labeling.  An equality subgraph $\mathcal{G}_l=(\mathcal{V},\mathcal{E}_l)$ is a subgraph of $\mathcal{G}$, where
	$$\mathcal{E}_l:=\{(i,j):l(i) + l(j) = \omega(i,j)\}.$$

\end{itemize}	

With the feature of the augmenting path, we can increase the cardinality of matching by constantly looking for an augmenting path. When no augmenting path is found, the current matching is the maximum cardinality matching, which is the principle of the Hungarian method.

The following Kuhn-Munkres theorem is a key theorem of the weighted Hungarian method.
\begin{thm}[Theorem 6 in\cite{kuhn1955hungarian}]\label{KM thm}
If $l$ is feasible and $\mathcal{M}$ is a perfect matching in $\mathcal{G}_l$, then $\mathcal{M}$ is a maximum weighted matching.
\end{thm}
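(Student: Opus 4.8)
The plan is to prove the statement by exhibiting $\sum_{v \in \mathcal{V}} l(v)$ as a weak-duality certificate: I will show it is an upper bound for the weight of \emph{every} matching and that the given perfect matching $\mathcal{M} \subseteq \mathcal{G}_l$ attains this bound exactly. Writing $w(\mathcal{N}) := \sum_{(i,j) \in \mathcal{N}} \omega(i,j)$ for the total weight of a matching $\mathcal{N}$, the whole argument collapses to a chain of two inequalities that turn into equalities precisely when the matching is perfect and lies in the equality subgraph.

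First I would bound an arbitrary matching $\mathcal{M}'$ from above. By the feasibility of $l$, each edge weight is dominated by the sum of its endpoint labels, $\omega(i,j) \le l(i) + l(j)$, so summing over the edges of $\mathcal{M}'$ gives
\begin{equation*}
w(\mathcal{M}') = \sum_{(i,j) \in \mathcal{M}'} \omega(i,j) \le \sum_{(i,j) \in \mathcal{M}'} \big( l(i) + l(j) \big).
\end{equation*}
Because $\mathcal{M}'$ is a matching, no vertex is an endpoint of two of its edges, so each label $l(v)$ occurs at most once on the right; together with the nonnegativity of the labeling $l : \mathcal{V} \to [0,\infty)$ this yields $\sum_{(i,j) \in \mathcal{M}'} (l(i)+l(j)) \le \sum_{v \in \mathcal{V}} l(v)$, and hence $w(\mathcal{M}') \le \sum_{v \in \mathcal{V}} l(v)$ for every matching.

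Next I would check that $\mathcal{M}$ meets this bound. Since $\mathcal{M} \subseteq \mathcal{E}_l$, the definition of the equality subgraph forces $\omega(i,j) = l(i) + l(j)$ on each edge of $\mathcal{M}$, so the first inequality above becomes an equality; since $\mathcal{M}$ is perfect, every vertex of $\mathcal{V}$ is covered exactly once, so each $l(v)$ is counted exactly once and the second inequality becomes an equality too. Thus $w(\mathcal{M}) = \sum_{v \in \mathcal{V}} l(v) \ge w(\mathcal{M}')$ for every matching $\mathcal{M}'$, which is exactly the claim that $\mathcal{M}$ is a maximum weighted matching.

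The individual estimates are routine; the one step I expect to require care is extending the upper bound from perfect matchings to \emph{all} matchings, and this is where the nonnegativity $l(v) \ge 0$ is indispensable. A non-perfect competitor $\mathcal{M}'$ leaves some vertices uncovered, and discarding their nonnegative labels only shrinks $\sum_{(i,j)\in\mathcal{M}'}(l(i)+l(j))$, so the inequality against $\sum_{v \in \mathcal{V}} l(v)$ still holds; without nonnegative labels one could only compare $\mathcal{M}$ with other perfect matchings, yielding a strictly weaker conclusion.
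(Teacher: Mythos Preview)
The paper does not actually prove this statement; it is merely quoted as Theorem~6 of Kuhn (1955), so there is no in-paper argument to compare against. Your weak-duality argument is the standard textbook proof and is correctly structured.

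There is, however, a genuine mismatch with the paper's definitions that you should flag. You write ``by the feasibility of $l$, each edge weight is dominated by the sum of its endpoint labels, $\omega(i,j)\le l(i)+l(j)$,'' but the paper's item~(10) defines a feasible labeling by the \emph{opposite} inequality $l(i)+l(j)\le\omega(i,j)$, and its initialization in Algorithm~\ref{Hungarian} uses $l(i)=\min_j\omega(i,j)$, consistent with that direction. Under the paper's convention your chain of inequalities reverses: $\sum_{v}l(v)$ becomes a \emph{lower} bound on the weight of every perfect matching, and the equality-subgraph matching $\mathcal{M}$ is therefore a \emph{minimum} weight perfect matching, not a maximum one. (Your careful treatment of non-perfect competitors via $l\ge 0$ also breaks down in this direction, since now the uncovered labels would need to be nonpositive.)

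This is almost certainly an internal inconsistency in the paper---the classical Kuhn--Munkres setup uses $l(i)+l(j)\ge\omega(i,j)$ with $l(i)=\max_j\omega(i,j)$, exactly as you implicitly assume---but as written your proof does not follow from the definitions the paper actually states. You should either note the sign discrepancy explicitly and proceed with the standard convention, or rewrite the argument for the paper's $\le$ convention and conclude ``minimum weight perfect matching.''
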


Based on the definitions presented above, we give the  Hungarian method for the assignment of agents and targets. We use the weighted Hungarian method proposed in~\cite{kuhn1955hungarian} to solve the assignment problem for agents, which is described as Algorithm \ref{Hungarian}.
	
\begin{algorithm}
\caption{The weighted Hungarian method}
\label{Hungarian}
\begin{algorithmic}
\STATE {\textbf{Initialization:} In weighted graph $\mathcal{G}=(\mathcal{V},\mathcal{E},\omega)$, $\mathcal{M}$ is a matching with $\mathcal{M} \neq \varnothing$. Let $l$ be an arbitrary feasible labeling of vertex, for example: $l(i)=\min_{j \in \mathcal{P}} \omega(i,j)$ for $i \in \mathcal{R} $, and $l(j)=0$ for $j\in \mathcal{P}$. $\mathcal{G}_l=(\mathcal{V},\mathcal{E}_l)$ is the corresponding equality subgraph.}
\WHILE{$\mathcal{M}$ is not a perfect matching}
\STATE {\textit{Step1: Augment the matching}}
\IF{exist a unmatch vertex $i \in \mathcal{R} $ and we can create an augmenting path}
\STATE {improve the matching by replacing matched edges with unmatched edges in the augmenting path}
\ELSE
\STATE {go to Step2}
\ENDIF
\STATE {\textit{Step2: Improve the label}}
\STATE {\quad\textit{Step2(a): Compute the slack }}
\STATE {\quad Let $\mathcal{S}\subseteq \mathcal{R}$,$\mathcal{T}\subseteq \mathcal{P}$, and $\mathcal{S},\mathcal{T}$ are vertices in the current augmenting path, compute the slack: $\delta_l = \min_{u \in \mathcal{S}, v \notin \mathcal{T}} {l(u)+l(v)-\omega(u,v)} $ }
\STATE {\quad\textit{Step2(b): Improve $l \rightarrow l' $  }}
\STATE {\quad $l'(r) = l(r) - \delta_l$, if $r \in S$}
\STATE {\quad $l'(r) = l(r) + \delta_l$, if $r \in T$}
\STATE {\quad $l'(r) = l(r)$, otherwise}
\ENDWHILE
\end{algorithmic}
\end{algorithm}

\section*{Acknowledgments}
This material is
supported by the Strategic Priority Research Program of Chinese Academy of Sciences under Grant No. XDA27000000, and by the National Natural Science Foundation
of China under grants 12288201, 12071465, and 72192804.

%Bibliography

\end{document}